\newtheorem{thm}{Theorem}
\newtheorem{lem}[thm]{Lemma}
\newtheorem{prop}[thm]{Proposition}
\theoremstyle{definition}
\newtheorem*{defn*}{Definition}
\theoremstyle{remark}
\newtheorem{remark}{Remark}
\newcommand{\norm}[1]{\left\Vert#1\right\Vert}
\newcommand{\Real}{\mathbb R}
\def\argmin{\mathop{\rm argmin}}
\def\Cov{\mbox{Cov}}
\def\E{\mbox{E}}
\def\tr{\mbox{trace}}
\def\half{\frac{1}{2}}
\def\rank{\mbox{rank}}
\def\smt{\text{\tiny T}}
\def\av{\mathbf a}
\def\bv{\mathbf b}
\def\gv{\mathbf g}
\def\hv{\mathbf h}
\def\gv{\mathbf g}
\def\qv{\mathbf q}
\def\uv{\mathbf u}
\def\vv{\mathbf v}
\def\wv{\mathbf w}
\def\xv{\mathbf x}
\def\yv{\mathbf y}
\def\zv{\mathbf z}
\def\Av{\mathbf A}
\def\Bv{\mathbf B}
\def\Cv{\mathbf C}
\def\Dv{\mathbf D}
\def\Iv{\mathbf I}
\def\Jv{\mathbf J}
\def\Lv{\mathbf L}
\def\Mv{\mathbf M}
\def\Nv{\mathbf N}
\def\Qv{\mathbf Q}
\def\Rv{\mathbf R}
\def\Sv{\mathbf S}
\def\Tv{\mathbf T}
\def\Uv{\mathbf U}
\def\Vv{\mathbf V}
\def\Wv{\mathbf W}
\def\Xv{\mathbf X}
\def\Zv{\mathbf Z}
\def\z{{\sf z}}
\def\q{{\sf q}}
\newcommand{\alphav}{\mbox{\boldmath{$\alpha$}}}
\newcommand{\betav}{\mbox{\boldmath{$\beta$}}}
\newcommand{\deltav}{\mbox{\boldmath{$\delta$}}}
\newcommand{\muv}{\mbox{\boldmath{$\mu$}}}
\newcommand{\Sigmav}{\mbox{\boldmath{$\Sigma$}}}
\newcommand{\Lambdav}{\mbox{\boldmath{$\Lambda$}}}
\newcommand{\Cc}{\mathcal{C}}
\newcommand{\Oc}{\mathcal{O}}
\newcommand{\Uc}{\mathcal{U}}
\def\1v{\mathbf 1}
\def\0v{\mathbf 0}
\def\Id{ \Iv}
\newcommand{\blind}{0}
\begin{document}

\def\spacingset#1{\renewcommand{\baselinestretch}%
{#1}\small\normalsize} \spacingset{1}


\if0\blind
{
  \title{\bf Penalized Orthogonal Iteration for Sparse Estimation of Generalized Eigenvalue Problem}
  \author{Sungkyu Jung\\
    Department of Statistics, University of Pittsburgh\\
        Jeongyoun Ahn \\
    Department of Statistics, University of Georgia\\
    and \\
    Yongho Jeon\\
    Department of Applied Statistics, Yonsei University}
  \maketitle
} \fi

\if1\blind
{
  \bigskip
  \bigskip
  \bigskip
  \begin{center}
    {\LARGE\bf Penalized Orthogonal Iteration for Sparse Estimation of Generalized Eigenvalue Problem}
\end{center}
  \medskip
} \fi

\bigskip
\begin{abstract}
We propose a new algorithm for sparse estimation of eigenvectors in generalized eigenvalue problems (GEP). The GEP arises in a number of modern data-analytic situations and statistical methods, including principal component analysis (PCA), multiclass linear discriminant analysis (LDA), canonical correlation analysis (CCA), sufficient dimension reduction (SDR) and invariant co-ordinate selection. We propose to modify the standard generalized orthogonal iteration with a sparsity-inducing penalty for the eigenvectors. To achieve this goal, we generalize the equation-solving step of orthogonal iteration to a penalized convex optimization problem. The resulting algorithm, called penalized orthogonal iteration, provides accurate estimation of the true eigenspace, when it is sparse. Also proposed is a computationally more efficient alternative, which works well for PCA and LDA problems. Numerical studies reveal that the proposed algorithms are competitive, and that our tuning procedure works well. We demonstrate applications of the proposed algorithm to obtain sparse estimates for PCA, multiclass LDA, CCA and SDR. Supplementary materials are available online.
\end{abstract}

\noindent%
{\it Keywords:}  Canonical correlation analysis, Classification, Group lasso, Lasso, Eigen-decomposition, Principal component analysis, Sufficient dimension reduction.
\vfill

\newpage
\spacingset{1.45} 

\section{Introduction} \label{sec:intro}

Many statistical problems can be cast into the mathematical framework of a generalized eigenvalue problem (GEP). In particular, we focus on the symmetric-definite GEP, posed as follows.
Suppose that $\Av \in \mathbb R^{p\times p}$ is a symmetric matrix, and $\Bv \in \mathbb R^{p\times p}$ is a symmetric positive-definite matrix.
While specific statistical contexts determine the exact nature of $\Av$ and $\Bv$,
a solution of GEP is given by a $d$-dimensional subspace $\mathcal U_d$ that is spanned by the generalized eigenvectors $\uv_1, \ldots, \uv_d$ corresponding to the $d$ largest generalized eigenvalues.
The following equations define a generalized eigen-pair $(\lambda_j, \uv_j)$:
\begin{equation}\label{eq:gep0}
\Av\uv_j = \lambda_j \Bv\uv_j,
\end{equation}
where the generalized eigenvalues, $\lambda_1 \ge \cdots \ge \lambda_d$, satisfy $\lambda_j = \uv_j^\smt\Av\uv_j /\uv_j^\smt\Bv\uv_j$.
The generalized eigenvectors are orthogonal with respect to $\Bv$, i.e., $\uv_i^\smt\Bv\uv_j = 1$ for $i = j$, and $0$ for $i\neq j$.  In the special case of $\Bv = \Iv_p$, the GEP is the standard eigenvalue problem.

Immediate applications of the GEP are to multivariate analysis problems, including the principal component analysis (PCA), canonical correlation analysis (CCA), multiclass linear discriminant analysis (LDA), invariant co-ordinate selection \citep{tyler2009invariant} and sufficient dimension reduction \citep{li2007sparse}.
The GEP also appears frequently in nonlinear dimension reduction \citep{kokiopoulou2011trace} and in computer vision and image processing \citep{zhang2013trace}.
We refer to the online supplementary material for detailed description of some selected statistical GEP problems.  

When $p$ is large, it is often desirable to find a sparse representation of $\Uc_d$, so that its basis vectors are sparse, i.e., the vectors have many zero loadings in their entries.
We propose an efficient algorithm for estimating sparse generalized eigenvectors from noisy observations of $\Av$ and $\Bv$.
It is well-known that in the standard eigenvalue problem, e.g. for PCA, the standard eigenvectors of the sample covariance matrix $\Av$ (while $\Bv = \Iv_p$) is inconsistent with the true principal directions for large $p$ \citep{Johnstone2009,Jung2009a}. Thus in the high-dimension, low-sample-size situations, sparsity-inducing methods have a potential in improving the estimation accuracy, as well as in providing interpretable eigenvectors through variable selection.

Our approach to obtain a sparse solution of the GEP is to extend the generalized orthogonal iteration, a standard numerical method of solving the GEP \citep{golub2012matrix}. The generalized orthogonal iteration consists of iterating two steps: one involving solutions of linear equations, and an orthogonalization step. One of our main ideas is to transform the equation-solving step to a minimization of a quadratic objective function, so that a sparsity-inducing penalty can be easily incorporated. We propose to use $\ell_1$-norm or $\ell_{2,1}$-norm penalty on the eigenvector matrix to induce element-wise or coordinate-wise sparsity. These penalty functions are those used in lasso and group-lasso regressions \citep[\emph{cf.}][]{Hastie2009}.
The proposed method is called \emph{Penalized Orthogonal Iteration} (POI). We utilize the block coordinate descent algorithm in solving the penalized minimization problem at each iteration.
We also study a computationally efficient alternative to POI, that comes down to solving just one minimization problem. We establish sufficient conditions under which the solutions of this alternative method, called \emph{Fast POI}, correspond to the solutions of the GEP. The conditions are satisfied under the situations for PCA and multiclass LDA.

The solutions of POI and Fast POI depend on the choice of a tuning parameter, dictating the degrees of penalization. Larger values of the tuning parameter result in more sparse solutions of the generalized eigenvectors. When the tuning parameter is zero, POI becomes the generalized orthogonal iteration. An eigenvalue-based cross-validation procedure is proposed, and is seen to work well in numerical studies.

To the best of authors' knowledge, there have been only a few proposals for sparse GEP in the literature \citep{sriperumbudur2011majorization,song2015sparse,tan2016sparse,gaynanova2016penalized,HanClemmensen:2016,Safo2018}. We briefly discuss their approaches and limitations, compared to our proposal.

Since finding the largest eigenvalue $\lambda_1$ that solves (\ref{eq:gep0}) is equivalent to maximize $\uv^\smt\Av\uv$ subject to $\uv^\smt\Bv\uv = 1$, \cite{sriperumbudur2011majorization} formulated both an $\ell_0$-constrained GEP and an $\ell_0$-penalized GEP:
\begin{align}
\max_\uv \uv^\smt\Av\uv, & \quad \mbox{subject to} \ \uv^\smt\Bv\uv = 1, \|\uv\|_0 \le s, \label{eq:constraint-gep-TAN}\\
\max_\uv \uv^\smt\Av\uv - \rho \|\uv\|_0, & \quad \mbox{subject to} \ \uv^\smt\Bv\uv = 1,  \label{eq:penalize-gep-Song}
\end{align}
for $1\le s \le p$ and $\rho > 0$, where $\|\uv\|_0$ is the number of nonzero elements of $\uv$. While \cite{sriperumbudur2011majorization} only proposed an algorithm to solve a variant of (\ref{eq:penalize-gep-Song}), \citet{song2015sparse} later proposed several approximate solutions of (\ref{eq:penalize-gep-Song}). 
Recently, \cite{tan2016sparse} proposed to solve (\ref{eq:constraint-gep-TAN}), by truncating the steepest ascent iterates in maximizing the Rayleigh coefficient $\uv \mapsto \uv^\smt\Av\uv /\uv^\smt\Bv\uv$.
\citet{gaynanova2016penalized} pointed out a fundamental difference between the penalized and constrained optimizations for sparse GEP, similar to (\ref{eq:constraint-gep-TAN}) and (\ref{eq:penalize-gep-Song}) but with $\ell_1$-norm.
\cite{Safo2018} proposed to estimate $\uv$ via minimizing $\norm{\uv}_1$ subject to a constraint $\|\Av \tilde\uv - \tilde\lambda \Bv \uv\|_\infty \le \rho$, where $(\tilde\lambda,\tilde\uv)$ is the non-sparse solution of (\ref{eq:gep0}).
As it is evident in \cite{sriperumbudur2011majorization}, \cite{song2015sparse}, and \cite{tan2016sparse} who limit themselves for solving only one eigen-pair, we are unclear how (\ref{eq:constraint-gep-TAN}) or (\ref{eq:penalize-gep-Song}) generalizes to simultaneously solving for multiple eigenvectors, $\uv_1,\ldots, \uv_d$.
 When multiple eigenvectors are needed, as is typical in practice, these methods are not readily applicable, at least not without a clever modification. Our algorithm is designed to estimate $\uv_1,\ldots, \uv_d$ altogether, and works well when $d > 1$.

\citet{HanClemmensen:2016} assumed $\Bv$ to be positive definite, and transformed the GEP to a regular eigen-decomposition of $\Bv^{-1}\Av$ (or $\Bv^{-1/2}\Av\Bv^{-1/2}$) while applying an $\ell_1$-penalty to achieve sparsity. However, their method is not directly applicable to the large-$p$-small-$n$-case, due to the numerically unstable inverse of the large matrix $\Bv$.
They used alternating direction method of multipliers for optimization, which causes their method to be computationally expensive.


\cite{chen2010coordinate} proposed to solve sufficient dimension reduction (SDR) problems by maximizing
$\mbox{trace}(\Uv^\smt\Av\Uv) - \rho_\lambda(\Uv)$,
for $\Uv \in \Real^{p\times d}$ satisfying $\Uv^\smt\Bv\Uv = \Iv_d$, in which the penalty function $\rho_\lambda$  enforces coordinate-wise sparsity (\ref{eq:estimation1_GROUPLASSO}). 
While \cite{chen2010coordinate}'s formulation is similar to our Fast POI with (\ref{eq:estimation1_GROUPLASSO}), their computation is much slower than any of our proposed algorithms, perhaps due to using both penalization and constraint. 

Our proposed algorithms provide sparse solutions of the original GEP (\ref{eq:gep0}), produces any number of eigenpairs simultaneously, is computationally efficient even for high-dimensional data, and is applicable to a number of statistical problems including PCA, LDA, CCA, SDR and invariant co-ordinate selection.

The rest of the paper is organized as follows.
In Section \ref{sec:method}, we introduce the   proposed sparse GEP methodology. Numerical algorithms are discussed in Section \ref{sec:algorithm}.
We demonstrate applications of our proposal to a number of statistical problems including PCA, LDA, SDR and CCA in Section \ref{sec:simulations}, in which some of the most promising competitors are numerically compared.
All proofs are contained in the appendix. The online supplementary material contains additional numerical results.

\section{Methodology}\label{sec:method}

\subsection{Setting}\label{sec:setting}

In most applications, the population matrices $\Av$ and $\Bv$  in (\ref{eq:gep0}) are symmetric non-negative definite, while $\Bv$ is often positive definite. For some applications, the rank of $\Av$ is much smaller than the dimension $p$ of the matrices.

We are interested in estimating the generalized eigenvectors $\uv_1, \ldots, \uv_d$ corresponding to the $d$ largest generalized eigenvalues, where $d \le \mbox{rank}(\Av)$. Our proposed methods estimate the subspace $\Uc_d = \text{span}(\uv_1, \ldots, \uv_d)$.
In order for $\Uc_d$ to be identifiable, one must assume the following on the size of the eigenvalues:
\begin{equation*}\label{eq:ev_condition}
\lambda_1 \ge \cdots \ge \lambda_d > \lambda_{d+1} \ge \cdots \ge \lambda_{p} \ge 0.
\end{equation*}

Let $\Uv_d = [\uv_1,\ldots,\uv_d]$ and $\Lambdav_d = \mbox{diag}(\lambda_1,\ldots,\lambda_d)$. The generalized eigenvectors are $\Bv$-orthogonal, and thus in general do not form an orthogonal basis of $\Uc_d$. However, one can readily obtain an orthogonal basis of $\Uc_d$ from $\Uv_d$, and, conversely, obtain $\Uv_d$  from any orthogonal basis of $\Uc_d$. The detail follows. Throughout the paper, the notation $\Oc(p,d)$ is used for the set of semi-orthogonal matrices; $\Oc(p,d) = \{\Xv \in \Real^{p\times d} :  \Xv^\smt \Xv = \Iv_d\}$. Let $\Oc(d) = \Oc(d,d)$ be the set of $d\times d$ orthogonal matrices.

The following equation is equivalent to (\ref{eq:gep0}):
\begin{equation}\label{eq:gepU}
\Av \Uv_d = \Bv \Uv_d \Lambdav_d.
\end{equation}
By the QR decomposition, we have $\Uv_d = \Qv_d\Rv_d$, for a $\Qv_d \in \Oc(p,d)$, 
and for a $d \times d$ upper-triangular matrix $\Rv_d$.
Then the GEP in (\ref{eq:gepU}) is equivalently written as
\begin{equation}\label{eq:gepQ}
\Av \Qv_d = \Bv \Qv_d \Lambdav_d^*,
\end{equation}
where $\Lambdav^*_d = \Rv_d\Lambdav_d\Rv_d^{-1}$ is a $d\times d $ upper-triangular matrix. Note that the diagonal elements of $\Lambdav_d^*$ are the same as those of $\Lambdav_d$.

Now suppose we have an arbitrary basis matrix $\tilde\Qv_d \in \Oc(p,d)$ such that $\mbox{span}(\tilde\Qv_d) = \Uc_d$. The generalized eigenvectors satisfying (\ref{eq:gepU}) can be recovered, as follows:

\begin{prop}\label{thm1}
Assume that $\tilde\Qv_d$ is an arbitrary $p \times d$ orthogonal matrix with the same column space as $\Uv_d$, where ($\Lambdav_d,\Uv_d)$ is the solution to (\ref{eq:gepU}).
For $\tilde{\Av}_{d} =  \tilde\Qv_d^\smt\Av\tilde\Qv_d$ and $\tilde{\Bv}_d = \tilde\Qv_d^\smt\Bv\tilde\Qv_d$,
 let $\Tv$ and $\Dv$ respectively be the matrix of eigenvectors and the diagonal matrix of eigenvalues of the following GEP:
\begin{equation}\label{eq:gepdual}
\tilde{\Av}_{d}\Tv = \tilde{\Bv}_{d}\Tv \Dv
\end{equation}
with $\Tv^\smt\tilde{\Bv}_{d}\Tv= \Iv_d$.
Then $\mbox{span}(\Uv_d) = \mbox{span}(\tilde\Qv_d\Tv)$.
If the diagonal values of $\Dv$ are distinct and in the decreasing order, then $\Uv_d = \tilde\Qv_d\Tv$ and $\Lambdav_d = \Dv $.
\end{prop}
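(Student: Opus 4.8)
The plan is to collapse the $p\times p$ problem to an equivalent $d\times d$ symmetric-definite GEP supported on the common column space $\Uc_d$, diagonalize it with the spectral theorem, and then read off both conclusions. First I would exploit that $\tilde\Qv_d$ and $\Uv_d$ share the column space $\Uc_d$ and both have full column rank $d$. Hence there is a unique $d\times d$ matrix $\Mv := \tilde\Qv_d^\smt\Uv_d$ with $\Uv_d = \tilde\Qv_d\Mv$, and $\Mv$ is invertible. Substituting $\Uv_d=\tilde\Qv_d\Mv$ into (\ref{eq:gepU}) and multiplying on the left by $\tilde\Qv_d^\smt$ collapses the identity to $\tilde\Av_d\Mv = \tilde\Bv_d\Mv\Lambdav_d$; that is, $(\Lambdav_d,\Mv)$ already solves the reduced GEP (\ref{eq:gepdual}). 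Moreover the $\Bv$-orthonormality $\Uv_d^\smt\Bv\Uv_d = \Iv_d$ transfers directly to $\Mv^\smt\tilde\Bv_d\Mv = \Iv_d$, so $\Mv$ carries exactly the normalization imposed on $\Tv$ (and is invertible for the same reason, since $\tilde\Bv_d\succ 0$).

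Next I would record that $\tilde\Bv_d = \tilde\Qv_d^\smt\Bv\tilde\Qv_d$ is positive definite (as $\Bv\succ 0$ and $\tilde\Qv_d$ has full column rank), so the reduced GEP is symmetric-definite and admits the symmetric positive definite square root $\tilde\Bv_d^{1/2}$. Setting $\Cv := \tilde\Bv_d^{-1/2}\tilde\Av_d\tilde\Bv_d^{-1/2}$, both $\Yv:=\tilde\Bv_d^{1/2}\Tv$ and $\Yv':=\tilde\Bv_d^{1/2}\Mv$ are orthogonal (by the two normalizations just noted) and satisfy $\Cv\Yv = \Yv\Dv$ and $\Cv\Yv' = \Yv'\Lambdav_d$. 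Thus $\Yv$ and $\Yv'$ are two orthogonal diagonalizations of the single symmetric matrix $\Cv$, which is what pins everything down.

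The span claim is then immediate: since $\Mv$ and $\Tv$ are both invertible $d\times d$ matrices, $\mbox{span}(\Uv_d)=\mbox{span}(\tilde\Qv_d\Mv)=\mbox{span}(\tilde\Qv_d)=\mbox{span}(\tilde\Qv_d\Tv)$. For the sharper conclusion, comparing the two diagonalizations of $\Cv$ shows the diagonals of $\Lambdav_d$ and $\Dv$ agree as multisets (the columns of $\Yv'$ being $d$ independent eigenvectors exhaust the spectrum); since the $\lambda_j$ are already in decreasing order while $\Dv$'s entries are distinct and decreasing, the two multisets sorted identically force $\Lambdav_d=\Dv$. Distinctness also makes every eigenspace of $\Cv$ one-dimensional, so the matching columns of $\Yv$ and $\Yv'$, and hence of $\Tv$ and $\Mv$, coincide up to sign, yielding $\Uv_d=\tilde\Qv_d\Tv$ once the sign convention on normalized eigenvectors is fixed.

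The one point demanding care, rather than a genuine obstacle, is the bookkeeping around eigenvalue multiplicity and the residual sign ambiguity of eigenvectors: I must argue that $\Mv$'s columns, being $d$ independent generalized eigenvectors of the $d\times d$ reduced problem, account for all $d$ eigenvalues with multiplicity, so that $\Lambdav_d$ and $\Dv$ carry the same spectrum, and only then invoke the distinctness hypothesis to reduce the remaining freedom to per-column signs. Everything else is routine linear algebra built on the reduction $\Uv_d=\tilde\Qv_d\Mv$ and the transfer of the $\Bv$-normalization.
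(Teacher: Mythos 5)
Your proof is correct and follows the same core reduction as the paper's: writing $\Uv_d = \tilde\Qv_d\Mv$ (the paper's $\Sv$), substituting into (\ref{eq:gepU}), and left-multiplying by $\tilde\Qv_d^\smt$ to show that $(\Lambdav_d,\Mv)$ solves the reduced GEP (\ref{eq:gepdual}) with the normalization $\Mv^\smt\tilde\Bv_d\Mv = \Iv_d$. The only difference is that the paper stops there and implicitly identifies $(\Tv,\Dv)$ with $(\Sv,\Lambdav_d)$, whereas you actually complete the uniqueness argument via the whitening $\Cv = \tilde\Bv_d^{-1/2}\tilde\Av_d\tilde\Bv_d^{-1/2}$ and the spectral theorem --- a worthwhile addition, including your correct observation that $\Uv_d = \tilde\Qv_d\Tv$ holds only up to per-column signs even under the distinctness hypothesis.
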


In the next subsection, we discuss our approaches in estimating an orthogonal basis of $\Uc_d$, i.e., $\Qv_d$ in (\ref{eq:gepQ}), from noisy versions of $\Av$ and $\Bv$. Proposition~\ref{thm1} can then be used to obtain the estimates of the generalized eigenvector and eigenvalue pair ($\Uv_d$, $\Lambdav_d$).
In practice the matrices $\Av$ and $\Bv$ in (\ref{eq:gepU}) are replaced by their empirical counterparts computed from a sample.
We treat $\Av$ and $\Bv$ to be the empirical matrices, with which we attempt to solve the GEP.
In most applications, the empirical matrices $\Av$ and $\Bv$ are non-negative definite by construction. We require $\Bv$ to be positive definite.
If not, we add a scaled identity matrix $\epsilon\Iv_p$, for a small $\epsilon>0$, and $\Bv + \epsilon\Iv_p$ is treated as $\Bv$.
We recommend using $\epsilon = \min(\log p / \mbox{rank}(\Bv), \sigma_{\Bv}/2$) where the $\sigma_{\Bv}$ is the smallest positive eigenvalue of $\Bv$.

\subsection{Proposed Method}\label{sec:proposed}
We consider two penalized optimization approaches in order to obtain sparse solutions of the GEP. The first approach is a generalization of the widely used orthogonal iteration and can be applied to almost all situations. The second approach aims to provide efficient computation for some high-dimensional problems, such as PCA and multiclass LDA.

\subsubsection{Penalized Orthogonal Iteration}\label{sec:POI}

We begin by reviewing the standard generalized orthogonal iteration in solving the GEP (\ref{eq:gepQ}).
Given $(\Av,\Bv)$ and an initial value $\hat\Qv_0 \in \Oc(p,d)$,
the standard generalized orthogonal iteration \citep[Section 8.7.3,][]{golub2012matrix} finds $\hat\Qv$, a solution of (\ref{eq:gepQ}), by iterating the following two steps until convergence. For $r = 1,2,\ldots$,

\textbf{Step 1.} Solve $\Bv \hat\Zv_r = \Av \hat\Qv_{r-1}$ for $\hat\Zv_r$.

\textbf{Step 2.} Obtain $\hat\Qv_r$ by QR decomposition $\hat\Zv_r = \hat\Qv_r \hat\Rv_r$.

The iteration stops when a distance between $\hat\Qv_{r-1}$ and $\hat\Qv_{r}$ is smaller than a threshold. We use the projection distance between the subspaces, defined later in (\ref{eq:projectionmetric}). 

The penalized orthogonal iteration (POI) modifies {Step 1} so that it can be cast into a convex optimization framework with a sparsity-inducing penalty. Let $p_\lambda(\Zv)$ be a penalty function on the matrix $\Zv$. Here, $\lambda \ge 0 $ represents the degrees of penalization, yielding $p_\lambda(\Zv) = 0$ if $\lambda = 0$. We propose to replace Step 1 by
\begin{equation}\label{eq:estimation1}
\hat\Zv_r = \argmin_{\Zv \in \Real^{p\times d}} \left\{
  \tr \left( \frac{1}{2} \Zv^\smt \Bv \Zv - \Zv^\smt\Av \hat\Qv_{r-1} \right) + p_\lambda (\Zv)
 \right\}.
\end{equation}
Note that if $\lambda = 0$, then  $\hat\Zv_r $ of (\ref{eq:estimation1}) is the solution of the original equation:
\begin{equation}\label{eq:estimation1-linsystem}
\Bv \hat\Zv_r = \Av \hat\Qv_{r-1}.
\end{equation}

The POI optimization problem (\ref{eq:estimation1}) is motivated by the fact that the linear equation system (\ref{eq:estimation1-linsystem}) is the first-order condition of a quadratic optimization problem (\ref{eq:estimation1}) without the penalty term. 

A natural choice for the sparsity-inducing penalty function $p_\lambda(\Zv)$ in (\ref{eq:estimation1}) is a lasso penalty \citep{tibshirani1996regression},
\begin{equation}\label{eq:estimation1_LASSO}
p_\lambda(\Zv) = \sum_{j=1}^d \lambda_j \norm{\zv_j}_1,
\end{equation}
where $\Zv = [\zv_1,\ldots,\zv_d]$ and $\lambda_j>0$.
While  (\ref{eq:estimation1_LASSO}) generally produces \emph{element-wise sparse} solutions, it may be more reasonable to assume that there exists a small subset of coordinate indices that are relevant to all of the $d$ largest generalized eigenvalues. For this purpose, we use the $\ell_{2,1}$-norm penalty for $p_\lambda (\Zv)$, also known as the group-lasso penalty \citep{yuan2006model}.  Let $\z_i^\smt$ be the $i$th row of $\Zv$ ($i =1,\ldots, p$). Then we set for $\lambda > 0$,
\begin{equation}\label{eq:estimation1_GROUPLASSO}
p_\lambda(\Zv) = \lambda \sum_{i=1}^p  \norm{\z_i}_2.
\end{equation}
We have implemented the POI with the penalty (\ref{eq:estimation1_LASSO}) for element-wise sparsity and (\ref{eq:estimation1_GROUPLASSO}) for \emph{coordinate-wise sparsity}. Other choices of penalty functions can be used as well; see, for example, \cite{tibshirani2011regression}.

In search of a sparse subspace basis, coordinate-wise sparsity is preferred to element-wise sparsity, since any element-wise sparse basis matrix $\Qv$ is in general no longer  element-wise sparse if arbitrarily rotated, e.g., in the orthogonalization step of the POI. In contrast, $\Qv\Vv$ for any $\Vv \in \Oc(d)$  is coordinate-wise sparse as long as $\Qv$ is coordinate-wise sparse. Since the basis of a subspace can only be coordinate-wise sparse, utilizing the coordinate-wise sparsity has a clear advantage in variable selection and its interpretation \citep[\emph{cf}. ][]{Bouveyron2016}.


\begin{remark}
In the PCA context (i.e, $\Bv = \Iv_p$ and $\Av = \widehat{\boldsymbol{\Sigma}}$, the sample covariance matrix), the POI overlaps with the sparse PCA algorithm of \cite{ma2013sparse}, who proposed to add a thresholding step to the orthogonal iteration for the standard eigenvalue problem. In particular, if the penalty function $p_\lambda$ is given by an $\ell_1$-norm, then our method coincides with Ma's method with soft-thresholding.
\end{remark}
%

\subsubsection{Fast POI}\label{sec:FPOI}

A computationally simpler alternative to the POI is to solve an alternative form of  (\ref{eq:estimation1}) just once. 
For this, we propose to replace $\Av\Qv_{r-1}$ in (\ref{eq:estimation1}) with the $p\times d$ matrix, denoted by $\Vv$, whose columns contain the eigenvectors of $\Av$ that correspond to the $d$ largest eigenvalues of $\Av$. This results in the following:
  \begin{equation}\label{eq:fpoi}
\hat\Zv = \argmin_{\Zv \in \Real^{p\times d}} \left\{
  \tr \left( \frac{1}{2} \Zv^\smt \Bv \Zv - \Zv^\smt\Vv \right) + p_\lambda(\Zv)
 \right\}.
\end{equation}
The orthogonal basis matrix $\hat\Qv$ is obtained by the QR decomposition of $\hat\Zv$. There is no ``outer'' iteration for this  approach, which yields much faster computation. The penalty function $p_\lambda(\Zv)$ can be chosen to be either (\ref{eq:estimation1_LASSO}) or (\ref{eq:estimation1_GROUPLASSO}). The solutions from this approach is referred to as \emph{Fast POI} solutions.

Note that the  POI solves the GEP in (\ref{eq:gepQ}) if $\lambda = 0$. On the other hand, the Fast POI solution with $\lambda = 0$ does not in general solve the original GEP. The following proposition identifies a sufficient condition under which the Fast POI solution with no penalty solves the original GEP.


\begin{prop} \label{prop:fpoi}\label{lem:optionDsubspaceequality}
Suppose $\Av$ and $\Bv$ are both $p\times p$ symmetric non-negative matrices. For any $1 \le d < p $, let $\Vv \in \Oc(p,d)$ be the eigenvector-matrix of $\Av$ corresponding to the $d$ largest eigenvalues of $\Av$, and $\Qv \in \Oc(p,d)$ satisfy the generalized eigen-equation (\ref{eq:gepQ}).
   If
  \begin{enumerate}
    \item[(a)] $\Bv = \Iv_p$, or
    \item[(b)] $\Bv$ is positive definite and $\rank(\Av) = d$,
  \end{enumerate}
  then the column space of $\Bv^{-1}\Vv$ is exactly the column space of $\Qv$.
\end{prop}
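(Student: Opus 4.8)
The plan is to treat the two cases separately, as they rest on different mechanisms, and throughout to use that a $\Qv$ solving (\ref{eq:gepQ}) spans the top-$d$ generalized eigenspace, $\mbox{span}(\Qv) = \mbox{span}(\uv_1,\ldots,\uv_d)$.

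Case (a) is essentially immediate. When $\Bv = \Iv_p$ we have $\Bv^{-1}\Vv = \Vv$, and (\ref{eq:gepQ}) reduces to the standard eigenvalue equation $\Av\Qv = \Qv\Lambdav_d^*$, whose columns form an orthonormal basis of the invariant subspace of $\Av$ spanned by the eigenvectors for its $d$ largest eigenvalues. That subspace is exactly $\mbox{span}(\Vv)$, and the eigengap $\lambda_d > \lambda_{d+1}$ ensures it is well defined. Hence $\mbox{span}(\Bv^{-1}\Vv) = \mbox{span}(\Vv) = \mbox{span}(\Qv)$.

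For case (b), I would first identify $\mbox{span}(\Vv)$ with the range of $\Av$. Since $\Av$ is symmetric with $\rank(\Av) = d$, its range is spanned by the eigenvectors for its $d$ nonzero eigenvalues, which are precisely the columns of $\Vv$; therefore $\mbox{span}(\Bv^{-1}\Vv) = \Bv^{-1}\ran(\Av)$, a $d$-dimensional space because $\Bv^{-1}$ is invertible. Next I would show each top-$d$ generalized eigenvector lies in this space. Setting $\Cv = \Bv^{-1/2}\Av\Bv^{-1/2}$, the generalized eigenvalues of $(\Av,\Bv)$ coincide with the ordinary eigenvalues of $\Cv \ge 0$, and $\rank(\Cv) = \rank(\Av) = d$, so $\lambda_1,\ldots,\lambda_d$ are exactly the positive ones. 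For each such $j$, the identity $\Av\uv_j = \lambda_j\Bv\uv_j$ with $\lambda_j > 0$ yields $\uv_j = \lambda_j^{-1}\Bv^{-1}\Av\uv_j \in \Bv^{-1}\ran(\Av)$, so $\mbox{span}(\Qv) \subseteq \Bv^{-1}\ran(\Av)$. A dimension count then closes the argument: both subspaces have dimension $d$, so the inclusion is an equality.

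The routine ingredients are the identities about ranges of symmetric matrices and the preservation of dimension under the invertible map $\Bv^{-1}$. The step deserving real care, and where I expect the content to sit, is verifying that under the rank hypothesis the $d$ largest generalized eigenvalues are genuinely nonzero, so that the substitution $\uv_j = \lambda_j^{-1}\Bv^{-1}\Av\uv_j$ is legitimate; this is what makes the whitening transform $\Cv = \Bv^{-1/2}\Av\Bv^{-1/2}$ the natural device, since it converts the rank condition on $\Av$ into a statement about the multiplicity of the zero eigenvalue of $\Cv$.
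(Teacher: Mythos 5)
Your proof is correct, and for case (b) it takes a noticeably different route from the paper's. The paper also whitens via $\Cv = \Bv^{-1/2}\Av\Bv^{-1/2}$, but it then runs the entire argument through column-space identities: it applies a QR decomposition to $\Bv^{1/2}\Qv_d$ to turn the generalized eigen-equation into an ordinary (rank-$d$) eigendecomposition of $\Cv$, deduces $\Cc(\Cv) = \Cc(\Bv^{1/2}\Qv_d)$ and separately $\Cc(\Cv) = \Cc(\Bv^{1/2}\,\Bv^{-1}\Vv_d)$, and cancels the invertible factor $\Bv^{1/2}$ to get equality of both spaces outright. You instead use the whitening only for the one fact that genuinely needs it --- that $\lambda_1,\ldots,\lambda_d$ are the positive eigenvalues of $\Cv$, hence nonzero --- and then argue directly from $\Av\uv_j = \lambda_j\Bv\uv_j$ that each $\uv_j = \lambda_j^{-1}\Bv^{-1}\Av\uv_j$ lies in $\Bv^{-1}\ran(\Av) = \mbox{span}(\Bv^{-1}\Vv)$, closing with a dimension count. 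Your version is more elementary in that it avoids the QR/augmented-matrix manipulation and proves only one inclusion; the price is that it leans on the dimension count and on the identification $\mbox{span}(\Qv) = \mbox{span}(\uv_1,\ldots,\uv_d)$ from the paper's setup, whereas the paper's chain of equalities is self-contained given its Lemma on column spaces. You are also right to flag the positivity of the top $d$ generalized eigenvalues as the load-bearing step; the paper uses the same fact implicitly when it truncates the eigendecomposition of $\Cv$ to rank $d$. Incidentally, your treatment of case (a) is slightly more careful than the paper's, which asserts $\Qv_d = \Vv_d$ ``by definition'' when only equality of the spanned invariant subspaces is needed (and guaranteed by the eigengap).
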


The condition (a) corresponds to the standard eigenvalue problem, arising in the context of PCA. The rank condition in (b) is satisfied when there are $K = d+1$ groups in a multiclass LDA problem.

In our experience, obtaining a Fast POI solution requires only a fraction of time that is needed for the corresponding POI solution. This is because POI typically requires many iterations until convergence. Despite its quick computing time, we found that Fast POI in the problems of PCA and multiclass LDA provides good approximations of the POI solutions.

\begin{remark}\label{remark:1-poi-Mai-Gaynanova}
After we formulated the Fast POI criterion, we found that, in the special case of multiclass LDA problems, our minimization approach (\ref{eq:fpoi}) is similar to the sparse LDA methods of \cite{mai2015multiclass} and \cite{gaynanova2016simultaneous}.
 Using the notation used in (\ref{eq:fpoi}), Mai et al's solutions are given by replacing $\Vv$ with the mean difference matrix with columns $\hat{\mu}_k -\hat{\mu}_1$, $k=2,\ldots,K$; Gaynanova et al's approach is equivalent to (\ref{eq:fpoi}) if $-\Zv^\smt\Vv$ is replaced by $\frac{1}{2}\Zv^\smt\Av\Zv -   \Zv^\smt\Dv$, where $\Dv$ consists of weighted mean differences $\hat{\mu}_i -\hat{\mu}_j$. Both approaches are dependent on the order of group labels and, consequently, their solutions change if different label orders are used. 
 In Gaynanova et al's approach, the inclusion of $\frac{1}{2}\Zv^\smt\Av\Zv$ is to ensure the objective function to be bounded below. Since we use $\Bv + \epsilon\Iv_p$ (if $\Bv$ is not full rank), the objective function (\ref{eq:fpoi}) is bounded. These methods are further compared numerically in Section~\ref{sec:numerical_LDA}.
 \end{remark}

\subsection{Eigenvalue Estimation}

Once $\hat\Qv$  is obtained either by POI or Fast POI, one might be interested in estimating the corresponding generalized eigenvalues. We illustrate two approaches here.
First, one can estimate the generalized eigenvalues by directly using Proposition \ref{thm1}. Let $\hat\Tv$ and $\hat\Dv$ be the solution to the GEP (\ref{eq:gepdual}) with $\Qv$ replaced by $\hat\Qv$.  Then the plug-in estimate of the eigenvector matrix is $\hat\Uv=\hat\Qv\hat\Tv$, and the eigenvalue matrix estimate is $\hat \Lambdav= \hat \Dv$.
If the elements of $\Av$ stand for covariances,
then using Proposition \ref{thm1} is  equivalent to
estimating $\lambda_j$ by the empirical variance contained in $\Av$ projected on $\hat\uv_j$, normalized by that of $\Bv$:
\begin{equation}
\label{eq:sample_eigenequation0}
 \hat \lambda_j =  { \hat\uv_j^\smt \Av   \hat \uv_j }/{ \hat \uv_j^\smt \Bv   \hat \uv_j} ,~ j=1,\dots,d.
\end{equation}

A more sophisticated approach is to use the eigen-equation (\ref{eq:gepU})  directly. For this, we estimate the generalized eigenvalue matrix $\Lambdav$ 
by the solution of the following minimization problem:
\begin{align}
 \tilde\Lambdav =  \argmin_{\Lambdav}
  & \norm{\Av \hat\Uv - \Bv \hat\Uv \Lambdav}_F^2
  , \quad (\Lambdav \mbox{ is diagonal}).  \label{eq:sample_eigenequation1sol}
\end{align}
This problem has a closed-form solution. 

\begin{lem}\label{lem:closeformsolutionforEvalues}
 Let $\alphav_j$ and $\betav_j$ be the $j$th column vectors of $\Av \hat\Uv$ and $\Bv \hat\Uv$. Then the solution $\tilde\Lambdav$ of (\ref{eq:sample_eigenequation1sol}) is $\tilde\Lambdav = \mbox{diag}(\tilde\lambda_1,\ldots,\tilde\lambda_d)$, where
 $\tilde\lambda_j = (\betav_j^\smt \betav_j)^{-1} \betav_j^\smt \alphav_j$.
%
\end{lem}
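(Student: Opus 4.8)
The plan is to recognize the minimization problem (\ref{eq:sample_eigenequation1sol}) as a set of $d$ decoupled univariate least-squares problems, one per diagonal entry of $\Lambdav$. Since $\Lambdav$ is constrained to be diagonal, only the diagonal entries $\lambda_1,\ldots,\lambda_d$ are free parameters, and the $j$th column of the residual matrix $\Av\hat\Uv - \Bv\hat\Uv\Lambdav$ depends only on $\lambda_j$. Concretely, the $j$th column of $\Bv\hat\Uv\Lambdav$ is $\lambda_j\betav_j$, so the $j$th column of the residual is $\alphav_j - \lambda_j\betav_j$. This is the key structural observation that reduces the matrix problem to scalars.

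First I would write the squared Frobenius norm as the sum of the squared Euclidean norms of its columns:
\begin{equation*}
\norm{\Av\hat\Uv - \Bv\hat\Uv\Lambdav}_F^2 = \sum_{j=1}^d \norm{\alphav_j - \lambda_j\betav_j}_2^2.
\end{equation*}
Because each summand involves only the single variable $\lambda_j$, the joint minimization separates into $d$ independent scalar minimizations, and I can minimize each term over $\lambda_j \in \Real$ on its own.

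Next I would solve each univariate problem $\min_{\lambda_j}\norm{\alphav_j - \lambda_j\betav_j}_2^2$. Expanding gives $\norm{\alphav_j}_2^2 - 2\lambda_j\,\betav_j^\smt\alphav_j + \lambda_j^2\,\betav_j^\smt\betav_j$, a convex quadratic in $\lambda_j$. Setting the derivative to zero yields the first-order condition $-2\betav_j^\smt\alphav_j + 2\lambda_j\,\betav_j^\smt\betav_j = 0$, and hence $\tilde\lambda_j = (\betav_j^\smt\betav_j)^{-1}\betav_j^\smt\alphav_j$, which is exactly the claimed closed form. The second-order term $\betav_j^\smt\betav_j = \norm{\betav_j}_2^2 > 0$ confirms this is a minimizer.

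The only point requiring care is that the solution is well defined, i.e.\ that $\betav_j^\smt\betav_j \neq 0$. This is the mild obstacle: I would note that $\betav_j$ is the $j$th column of $\Bv\hat\Uv$, and since $\Bv$ is assumed positive definite and $\hat\Uv$ has full column rank (its columns form a basis of the estimated subspace), the columns $\betav_j$ are nonzero, so the inverse $(\betav_j^\smt\betav_j)^{-1}$ exists. With the separability established and each scalar optimum computed, assembling $\tilde\Lambdav = \mbox{diag}(\tilde\lambda_1,\ldots,\tilde\lambda_d)$ completes the proof.
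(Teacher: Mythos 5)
Your proof is correct and follows exactly the paper's argument: the paper's proof simply notes that the problem is column-wise separable and each subproblem is a least-squares problem, which is precisely the decomposition you carry out in detail. Your additional check that $\betav_j^\smt\betav_j > 0$ (via positive definiteness of $\Bv$ and full column rank of $\hat\Uv$) is a welcome refinement the paper leaves implicit.
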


The eigenvalue estimates $\hat\Lambdav$ of  (\ref{eq:sample_eigenequation0}) and $\tilde\Lambdav$ of (\ref{eq:sample_eigenequation1sol}) are in general different. They are the same when $\hat{\uv}_j$ are the exact solutions of the GEP (\ref{eq:gepU}), which are generally non-sparse.
In our experiments, these two estimates were numerically close to each other. We used (\ref{eq:sample_eigenequation0}) in all of our numerical analyses.

\subsection{Tuning Parameter Selection}\label{sec:tuning}

The tuning parameter $\lambda$ in the penalty function plays an important role in the estimation. When $\lambda = 0$, our proposal provides non-sparse solutions, while the estimated subspaces tend to have more sparse bases as $\lambda$ increases.
We consider a general cross-validation approach that can be used for any regularized GEP.

Suppose that we have a data set that can be split into two subsets that produce independent pairs $(\Av^{(r)}, \Bv^{(r)})$, $r = 1,2$. We use the data indexed by $r = 1$ to train the estimates $\hat\Uv_{\lambda}$ and $\hat\Lambdav_{\lambda}$ for various values of $\lambda$.
The data indexed by $r = 2$ is used to evaluate these estimates, to tune $\lambda$. In particular, for a given $\lambda$, we define the cross-validation score as
\begin{equation}\label{eq:CV2}
CV(\lambda) = \mbox{trace}\left[ (\hat\Uv_\lambda^\smt \Bv^{(2)} \hat\Uv_\lambda )^{-1}  \hat\Uv_\lambda^\smt \Av^{(2)} \hat\Uv_\lambda\right].
\end{equation}

We choose the $\lambda$ with the largest average cross-validation score, based on repeated random splits, or by predefining the training and tuning sets.
Candidate values of $\lambda$ are found in an interval $(0,\lambda_{\max})$.

The upper bound $\lambda_{\max}$ is set so that $\hat\Uv_\lambda = \0v$ for any $\lambda > \lambda_{\max}$, and  depends on the data and the penalty term used.
%
%
%
We defer the discussion on the choice of $\lambda_{\max}$ to Section~\ref{sec:algorithm}; see Remarks~\ref{remark2} and \ref{remark3}.


We interpret the CV score (\ref{eq:CV2}) as follows. Note that when $\Bv^{(2)}$ and $\Av^{(2)}$ are replaced by $\Bv^{(1)}$ and $\Av^{(1)}$, $CV(\lambda)$ is simply the sum of estimated eigenvalues (\ref{eq:sample_eigenequation0}). Heuristically, maximizing $CV(\lambda)$ is equivalent to finding the subspace with the largest sum of ``prediction eigenvalues''.
%
%
We delve further into the following data-analytic situations.

For PCA,  $CV(\lambda)$ represents the total variance contained in the $d$-dimensional eigenspace. To be specific, let $\Xv$ be the centered $n \times p$ data matrix, then we set $\Av^{(2)} = \Xv^\smt \Xv$ and $\Bv^{(2)} = \Iv_p$, which in turn leads to
$$ CV(\lambda) = \mbox{trace}(\hat\Uv_\lambda^\smt \Av^{(2)} \hat\Uv_\lambda)
               = \mbox{trace}(\Xv^\smt \Xv) -\|\Xv (\Iv_p - \hat\Uv_\lambda \hat\Uv_\lambda^\smt)  \|^2_F ,$$
where the second term is sometimes called \emph{the reconstruction error} of the subspace spanned by $\hat\Uv_\lambda$.
 Thus maximizing $CV(\lambda)$ is equivalent to minimizing the reconstruction error, an approach that can be found in the literature on PCA; see e.g., \cite{shen2008sparse,josse2012selecting}.
In our numerical studies, we found that $CV(\lambda)$ is typically a concave function, and is negatively correlated with the distance to the true eigenspace from the estimate associated with $\lambda$, i.e. the larger $CV(\lambda)$, the better the estimation. These are illustrated in Fig.~~\ref{fig:tuning} (top two panels) with simulated data, further discussed in Section~\ref{sec:numerical}.

\begin{figure}[p]
  \centering
  \includegraphics[width=0.8\textwidth]{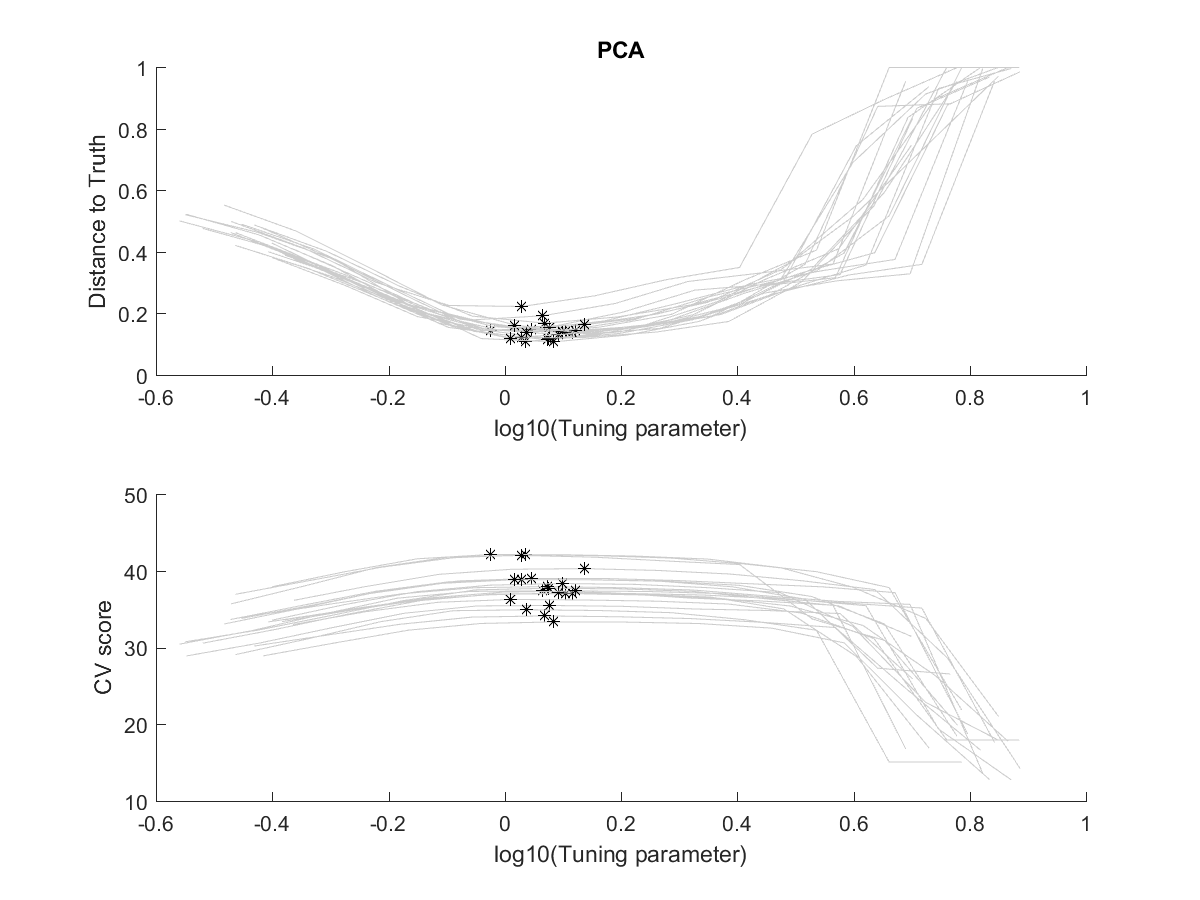}\\
  \includegraphics[width=0.8\textwidth]{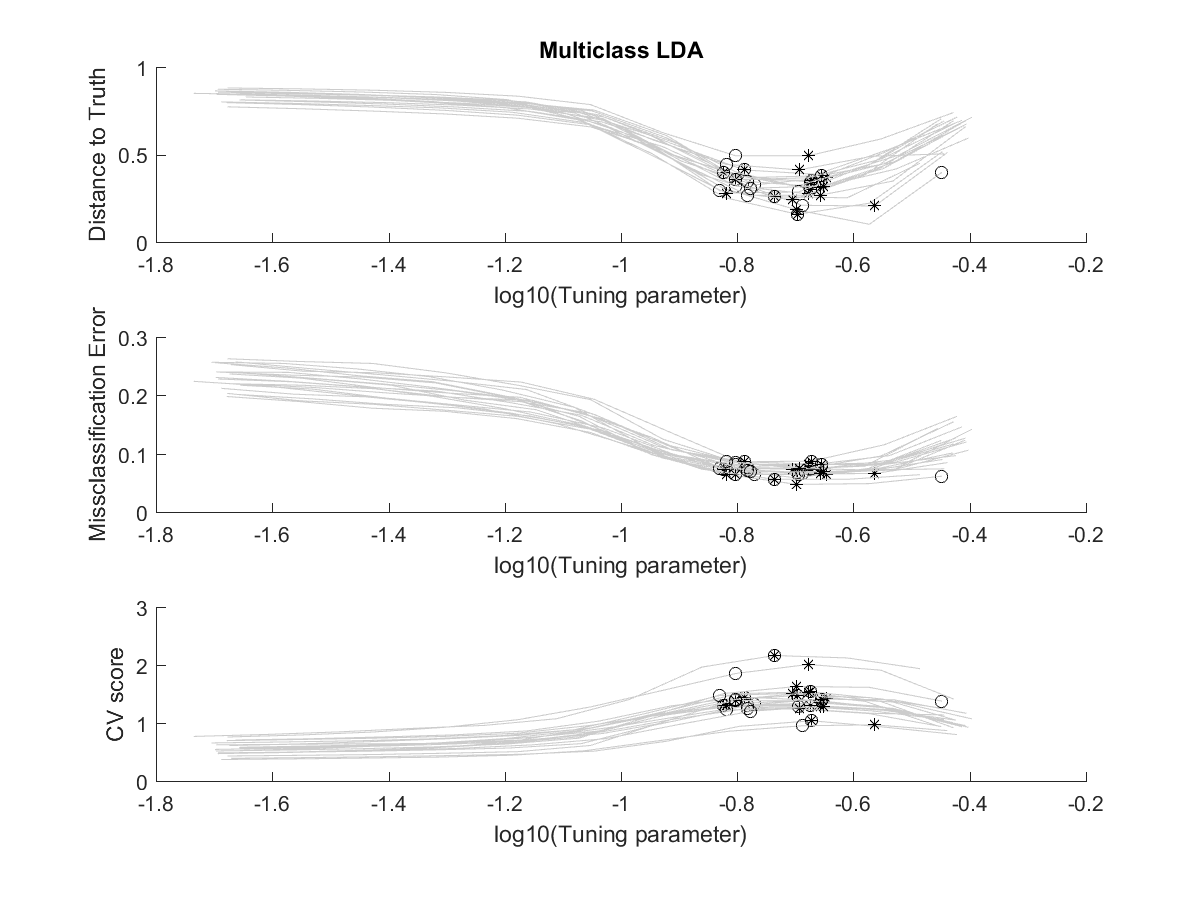}\\
  \caption{The proposed tuning procedure works well for our simulated data. Top two panels are from PCA models; bottom three panels are from multiclass LDA models.  $*$ indicates the location of tuned $\lambda$ that maximizes $CV(\lambda)$. $\circ$ indicates the location of tuned $\lambda$ that minimizes the tuning misclassification error rate.}\label{fig:tuning}
\end{figure}

In a situation with multiple groups (\emph{cf}. Section~\ref{sec:numerical_LDA}), we set $\Av^{(2)} = \Sv_B$ as the between-group covariance, and $\Bv^{(2)} = \Sv_W$ as the within-group covariance, estimated from the tuning set. The groups are more clearly separated for larger $\Sv_B$ and smaller $\Sv_W$.
Thus, $CV(\lambda)$ is large if the groups, projected on the subspace spanned by $\hat\Uv_\lambda$,  are well-separated. In a view from clustering, $CV(\lambda)$ is akin to the so-called CH index \citep{calinski1974dendrite} in spirit; they are proportional to each other if $d = 1$. Larger CH index indicates clearer clustering, which is also associated with larger $CV(\lambda)$. Figure~\ref{fig:tuning} (bottom three panels) shows that, in the case of multiclass LDA, $CV(\lambda)$ is a concave function of $\lambda$ and is negatively correlated with the misclassification rate. This is not unexpected, because larger $CV(\lambda)$ implies clearer separation of groups, which in turn makes the classification easier.

 For classification in mind, one may use the misclassification rate as a tuning creterion. 
Tuning by (\ref{eq:CV2}) is in fact on par with tuning by misclassification rate; see Section~\ref{sec:numerical_LDA}. The tuned $\lambda$ by both methods are close to each other, as shown in Fig.~\ref{fig:tuning}. 

We note that since our tuning procedure is intended for eigenspace prediction accuracy, it sometimes chooses more variables than desired. For more precise variable selection, one may adopt the one-standard error rule; choose the most parsimonious model within one standard error of the maximum $CV(\lambda)$, which may only be estimated under multiple splits of data as in $K$-fold cross-validation.

\section{Algorithms}\label{sec:algorithm}
The POI and Fast POI solutions can be efficiently implemented. We provide algorithms to solve the POI (\ref{eq:estimation1}) when the penalty function is either (\ref{eq:estimation1_LASSO}) or (\ref{eq:estimation1_GROUPLASSO}). Note that the Fast POI (\ref{eq:fpoi}) can be solved in the same way (\ref{eq:estimation1}) is solved, with $\Av \Qv_{r-1}$ replaced by ${\Vv}$.  Our algorithms for (\ref{eq:estimation1}) are guaranteed to converge to the optimal solution \citep[\emph{cf}.][]{tseng1993dual}, thus the Fast POI always converges.
We are not aware of a general condition under which the proposed POI is  guaranteed to converge, although it has converged in all of our experiments. 

\subsection{Algorithm for element-wise sparse estimation}

Solving (\ref{eq:estimation1}) with the penalty (\ref{eq:estimation1_LASSO}) amounts to solving $d$ separate problems. Specifically, the $j$th column $\zv_j$ of $\Zv_r$ is
\begin{equation}\label{eq:estimation2}
\zv_j = \argmin_{\zv \in \Real^p}
\left\{
  \left( \frac{1}{2} \zv^\smt \Bv \zv - \zv^\smt\Av \qv_j \right) + \lambda_j \norm{\zv}_1
 \right\},
\end{equation}
where $\qv_j$ is the $j$th column of $\Qv_{r-1}$.
The minimization problem (\ref{eq:estimation2}) can be efficiently solved by the cyclic coordinate descent algorithm. The algorithm updates the coordinates of the iterate $\zv$ in a cyclic fashion. In particular, let $S_\lambda(z) = \mbox{sign}(z) (|z| - \lambda)_+$ be the soft-thresholding operator. Then, the $i$th coordinate $z_i$ of $\zv$ is updated by
\begin{equation}\label{eq:estimation_softthresholding}
z_i \leftarrow  S_{\lambda_j}\left(  \av_i^\smt \qv_j - \bv_i^\smt \zv + {b}_{ii} z_i
  \right)/{b}_{ii},
\end{equation}
where ${b}_{ii}$ is the $i$th diagonal element of $\Bv$, and $\av^\smt_i$ (or $\bv^\smt_i$) is the $i$th row of the symmetric matrix $\Av$ (or $\Bv$, respectively).


\begin{remark}\label{remark2}

 For both POI and Fast POI,  setting too large $\lambda_j$ gives the trivial solution $\zv_j = 0$.
 It can be shown that $\lambda^{o}_{j} = \max_{i = 1,\ldots,p} | \av_i^\smt \qv_{j} |$ is the maximum value of $\lambda_{j}$ which gives a non-trivial solution of  (\ref{eq:estimation2}).
 However, $\lambda^{o}_{j}$ depends on the iterate $\hat\Qv_{r-1}$. Since  the solution $\zv_j$ of (\ref{eq:estimation2}) is sparse  for large $\lambda_j$, we take the maximum of $\lambda^{o}_{j}$ over the possible values of ``sparse'' $\Qv_{r-1} \in \Oc(p,d)$ such that each column of $\Qv_{r-1}$ has only one nonzero entry. For $\Oc_0(p,d) = \{\Qv \in \Oc(p,d): \#\{(i,j): q_{ij} \neq 0 \} = d \}$, we set
 \begin{align}
 \lambda_{\max}
 = \max_{i = 1,\ldots,p} \left\{ \max_{\Qv  \in \Oc_0(p,d)} \max_{j=1,\ldots,d} |\av_i^\smt \qv_{j}|  \right\} = \max_{i,j} {|{a}_{ij}|},  \label{eq:A-lasso_MAX}
 \end{align}
 where $a_{ij} $ is the $(i,j)$th element of $\Av$. 
 \end{remark}

\subsection{Algorithm for coordinate-wise sparse estimation}
Since the problem (\ref{eq:estimation1}) is convex and the penalty function (\ref{eq:estimation1_GROUPLASSO}) is block-separable, the block coordinate descent algorithm is guaranteed to converge to an optimal solution \citep{tseng1993dual}.
Denote $\qv_j$ and $\q_i^\smt$ for the $j$th column and $i$th row of $\Qv_{r-1}$, respectively. Define $\zv_o^\smt = [\zv_1^\smt,\ldots,\zv_d^\smt]$ and $\qv_o^\smt = [\qv_1^\smt,\ldots,\qv_d^\smt]$, $\zv^\smt = [\z_1^\smt,\ldots,\z_p^\smt]$ and $\qv^\smt = [\q_1^\smt,\ldots, \q_p^\smt]$. Then,
\begin{align}
\tr \left( \frac{1}{2} \Zv^\smt \Bv \Zv - \Zv^\smt\Av \Qv_{r-1}  \right)
      &=  \sum_{j=1}^d \left(\frac{1}{2}\zv_j \Bv  \zv_j^\smt - \zv_j \Av \qv_j\right)  \nonumber \\
      &=  \frac{1}{2}\zv_o^\smt ( \Iv_d \otimes \Bv) \zv_o - \zv_o^\smt ( \Iv_d \otimes \Av) \qv_o  \nonumber \\
      &= \frac{1}{2}\zv^\smt (\Bv \otimes \Iv_d) \zv - \zv^\smt (\Av \otimes \Iv_d) \qv  \nonumber \\
      & = \frac{1}{2}\sum_{i=1}^p \sum_{j=1}^p \z_i^\smt b_{ij}\z_j -  \sum_{i=1}^p \z_i^\smt \deltav_i,  \label{eq:grlasso1}
\end{align}
where $a_{ij}$, (or $b_{ij}$) is the $(i,j)$th element of $\Av$ (or $\Bv$, respectively) and $\deltav_i = \sum_{j=1}^p a_{ij} \q_{j}$.
Using the fact that $\Bv$ is symmetric, for the $g$th block with all $\z_i$, $i \neq g$, fixed, the problem becomes
   \begin{equation}  \label{eq:c-lasso_problem}
   \min_{\z_g}
   \left\{
   \frac{1}{2} \z_g^\smt b_{gg}\z_g - \z_g^\smt \av_g  + \lambda \norm{\z_g}_2
   \right\},
   \end{equation}
where   $\av_g = \deltav_g - \sum_{i\neq g} b_{gi} \z_i$, and its solution is obtained by
   \begin{equation}  \label{eq:c-lasso_solutions}
    \hat\z_g = \frac{1}{b_{gg}}\left( 1 - \frac{\lambda}{\norm{\av_g}_2}\right)_+ \av_g.
   \end{equation}

In short, the algorithm updates $\Zv$ in a cyclic fashion, with the initial value given by $\Zv = \Qv_{r-1}$. For $g = 1,\ldots,p$, the $g$th row $\z_g^\smt$ of $\Zv$ is updated by (\ref{eq:c-lasso_solutions}) until a convergence criterion is met.

\begin{remark}\label{remark3}
 It is easy to see that $\lambda_{0} = \max_{g = 1,\ldots,p} \norm{\deltav_g}_2$ is the maximum value of $\lambda$ that gives a non-trivial solution of (\ref{eq:c-lasso_problem}). Since $\deltav_g = \sum_{j=1}^p a_{gj} \q_{j}$ depends on the iterate $\Qv_{r-1}$ of the POI, we take
 \begin{equation}  \label{eq:c-lasso_MAX}
 \lambda_{\max}  = \max_{g = 1,\ldots,p} \left\{ \max_{\Qv \in \Oc_0(p,d)} \norm{\sum_{j=1}^p a_{gj} \q_{j}}_2 \right\},
 \end{equation}
 which provides an \emph{upper bound} for the maximum value of $\lambda$. Note that in (\ref{eq:c-lasso_MAX}) we used $\Qv \in \Oc_0(p,d)$ whose entrywise $L_0$-norm is $d$ (that is, as sparse as possible).
To simplify (\ref{eq:c-lasso_MAX}), for each fixed $g$, denote $a_{g,(j)}$ for the $j$th largest element among the $g$th row of $\Av$, in the absolute value.
Then $\max_{\Qv \in \Oc_0(p,d)} \norm{\sum_{j=1}^p a_{gj} \q_{j}}_2 =  (\sum_{j=1}^d a_{g,(j)}^2 )^{1/2}, $
which in turn gives, for POI solutions,
$$ \lambda_{\max}  = \max_{g = 1,\ldots,p}\bigg(\sum_{j=1}^d a_{g,(j)}^2 \bigg)^{1/2}.$$
For an upper bound of the tuning parameter used in Fast POI, it is straightforward to see that
$ \lambda_{\max}  = \max_{g = 1,\ldots,p}(\sum_{j=1}^d v_{gj}^2 )^{1/2},$
where $v_{gj}$ is the $(g,j)$th element of $\Vv$.
%

\end{remark}

\section{Applications to Multivariate Analysis}
\label{sec:example}\label{sec:simulations}\label{sec:numerical}
In this section, we demonstrate the application of the proposed algorithms to principal component analysis (PCA), multiclass linear discriminant analysis (LDA), sufficient dimension reduction (SDR) and canonical correlation analysis (CCA). 


\subsection{Sparse Principal Component Analysis}\label{sec:numerical_PCA}

The standard PCA amounts to solving the ordinary eigen-decomposition of the covariance matrix $\boldsymbol{\Sigma}$.  By setting $\Av = \boldsymbol{\Sigma}$, $\Bv = \Id_p,$ the problem (\ref{eq:gep0}) becomes the ordinary eigen-decomposition problem, and the solution ($\uv_j$, $\lambda_j$)  corresponds to the principal component (PC) direction and variance pair.

We demonstrate the application of our method to sparse PCA, using a large-scale genomic data. The data set we use is adopted from \cite{ciriello2015comprehensive}, and consists of 272 cases of 1000 variables, where the first 500 variables are gene expression levels from the original data set, and the latter 500 are normally distributed random  noises. Since it is typical that in performing PCA one does not know which dimension $d$ to choose, we explore $d = 1$ to $20$, and estimate the principal subspace of dimension $d$ using POI with coordinate-wise sparse penalty.
A graphical summary of this study can be found in Fig.~\ref{fig:PCA_LF}. It can be seen that the eigenvalue and eigenvector estimates are stable across different choices of $d$, and our method has correctly screened out the noise variables for all choices of $d$. See the online supplementary material for details.

\begin{figure}[t]
  \centering
  \includegraphics[width=0.9\textwidth]{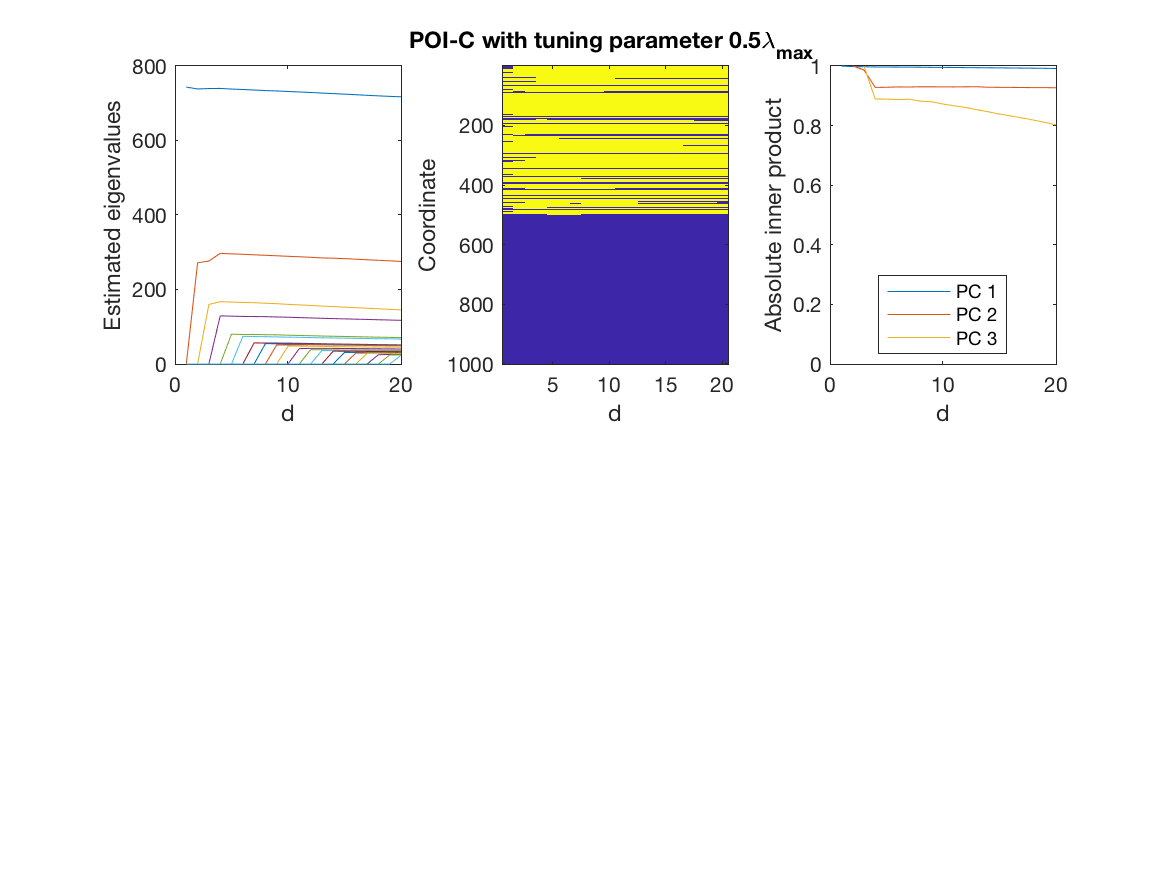}\\
  \vspace{-2.3in}
  \caption{Sparse PCA by POI with coordinate-wise sparse penalty for a genomic data set. The analysis is repeated for subspace dimension $d = 1$ to $20$. Shown are estimated eigenvalues (left), nonzero coordinates shown as lighter color (middle) and $|\hat{\qv}_{i,i}^\smt \hat{\qv}_{i,d}|$ for $i =1,2,3$, where $\hat{\qv}_{i,d}$ is the $i$th PC direction estimate when estimating PC subspace of dimension $d$ (right).
  }\label{fig:PCA_LF}

\end{figure}

We now present a simulation study to compare the performance of our application to sparse PCA with competing methods \citep{zou2006sparse,shen2008sparse,song2015sparse}.
We simulate from $p$-variate normal distribution with mean zero and covariance matrix $\boldsymbol{\Sigma}$. Let $d$ stand for the number of distinguishable principal components. We use $d = 3$ or $5$. Let $\boldsymbol{\Sigma} = \Uv_d \Lambdav_d \Uv_d^\smt + \Id_p$, where $\Lambdav_d$ is the diagonal matrix consisting of eigenvalues, satisfying $\mbox{diag}(\Lambdav_d^{1/2}) = 3 ( 5,4,\ldots,5-d+1)$.
We use three models for the $p \times d$ eigenvector matrix $\Uv_d$, defined below. 
\begin{itemize}
\item[Model I:] The eigenvector matrix $\Uv_d$ has only $s = 10$ nonzero rows. Specifically, the first $s$ elements in each column of $\Uv_d$ are $\zv / \norm{\zv}_2$, where $\zv$ is independently sampled from $N_s(\0v,\Id_s)$. The eigenvectors under this model are coordinate-wise sparse.
\item[Model II:] For $s = 5$, the first $ds$ rows of $\Uv_d$ are $s^{-1/2} \Id_d \otimes \1v_s$, and the rest of rows consist of zeros. Under this model, eigenvectors are not only orthogonal, but also combinations of disjoint sets of coordinates.
\item[Model III:] For $s = 5$, the eigenvector matrix $\Uv_d$ has $ds$ nonzero rows, formed similar to a block-lower-triangular matrix. Specifically, the nonzero rows are given by the QR decomposition of $\Av_d \otimes \1v_s$, where $\Av_d$ is a $ d \times d$ lower-triangular matrix, with all lower-triangular elements being one.
\end{itemize}

 Model I calls for a coordinate-wise sparse estimation, Model II for an element-wise sparse estimation, while Model III does not clearly favor any choice of penalty. Both Models II and III seem unnatural to be conceived as a model underlying any real data.  As argued in \cite{Bouveyron2016} coordinate-wise sparse principal component directions are more natural to interpret than element-wise sparse directions.
%

The estimates from our methods are denoted by POI-L, POI-C, FastPOI-L and FastPOI-C, where ``L'' stands for using the lasso penalty (\ref{eq:estimation1_LASSO}), and ``C'' stands for using the coordinate-wise group lasso penalty (\ref{eq:estimation1_GROUPLASSO}).
For each method, we define a candidate set of tuning parameters $L = \{ (0.75)^{i} \lambda_{\max}: i = 0, 1, \ldots, t, \infty \}$.
We set $t = 31$, and $\lambda_{\max}$ as the upper bound for $L$ as defined in  Remarks~\ref{remark2} and \ref{remark3}. 
For each choice of $\lambda \in L$, we compute the cross-validation score (\ref{eq:CV2}), using an independent tuning set of data.

The performance of each method is measured by a distance between the estimated subspace, spanned by $\widehat\Uv_d$, and the true eigenspace $\Uv_d$.
The principal angles $\theta_i$ between two subspaces are often used to measure the distance. The principal angles between $\widehat{\Uv} \in \Oc(p,d_1)$ and $\Uv \in \Oc(p,d_2)$ are defined as $\theta_i = \arccos( {\sigma}_i(\widehat{\Uv}^\smt\Uv))$, where ${\sigma}_i(\Av)$ is the $i$th largest singular value of $\Av$. We use the projection metric
\begin{equation}\label{eq:projectionmetric}
\rho(\widehat{\Uv}, \Uv) = \max\{\sin \theta_{i}:i=1,\ldots, \min(d_1,d_2) \}.
\end{equation}
If $d_1 = d_2 = d$, then the projection metric is equivalent to the difference between two corresponding projection matrices:
$\rho (\widehat{\Uv}, \Uv) = \|\widehat{\Uv}\widehat{\Uv}^\smt - \Uv\Uv^\smt \|,$ where $\|\cdot\|$ is the natural spectral norm.

Let $\widehat{\Uv}(\lambda)$ be the estimate of ${\Uv}$, when tuning parameter $\lambda$ is used.
Denote the cross-validated estimate by $\widehat{\Uv}(\hat\lambda)$ where  $\hat\lambda = {\rm arg} \max_{\lambda \in L} CV( \lambda)$.
The performances of subspace estimates are measured by the following criteria:
\begin{enumerate}
\item The minimal distance to the truth: $\min_{\lambda \in L} \rho(\widehat{\Uv}(\lambda), \Uv).$
\item The distance from the cross-validated estimate: $\rho(\widehat{\Uv}(\hat\lambda), \Uv).$
\end{enumerate}

\begin{table}[t]
{\footnotesize
\centering
\begin{tabular}{ccc|ccccccc}
Model & $d$ & $p$ & POI-L & POI-C & FastPOI-L & FastPOI-C & Zou et al. & Song et al. & Shen \& Huang \\ \hline
\hline
\multirow{4}{*}{I} &3&200&0.196&\textbf{0.159}&0.200&\textbf{0.162}&0.195&0.252&0.196\\
                   &3&500&0.197&\textbf{0.150}&0.210&0.156&0.197&0.276&0.194\\
                   &5&200&0.310&\textbf{0.196}&0.387&0.220&0.311&0.330&0.310\\
                   &5&500&0.348&\textbf{0.204}&0.450&0.363&0.303&0.448&0.343\\
\hline
\multirow{4}{*}{II} &3&200&\textbf{0.106}&0.162&0.155&0.160&0.124&\textbf{0.104}&\textbf{0.105}   \\
                    &3&500&\textbf{0.102}&0.164&0.155&0.164&0.123&\textbf{0.101}&\textbf{0.101}  \\
                    &5&200&\textbf{0.168}&0.496&0.332&0.458&0.242&0.228&0.215  \\
                    &5&500&\textbf{0.169}&0.561&0.397&0.699&0.290&0.308&0.207  \\
\hline
\multirow{4}{*}{III} &3&200&\textbf{0.108}&0.150&0.177&0.154&0.151&\textbf{0.105}&\textbf{0.106}\\
                     &3&500&\textbf{0.117}&0.154&0.195&0.161&0.164&\textbf{0.119}&\textbf{0.115} \\
                     &5&200&\textbf{0.214}&0.343&0.479&0.404&0.400&0.371&0.294  \\
                     &5&500&\textbf{0.222}&0.352&0.623&0.556&0.489&0.439&0.282
\end{tabular}}
\caption{The minimal projection distance to the truth, averaged from 100 repetitions, for principal subspace estimation.  The standard errors are at most 0.023. Smaller distance indicates more precise estimation. Highlighted are the best performed models (within 2 standard error of the smallest).
  \label{tab:simulation1}}
\end{table}

\begin{table}[t]
\centering
{\footnotesize
\begin{tabular}{ccc|ccccccc}
Model & $d$ & $p$ & POI-L & POI-C & FastPOI-L & FastPOI-C & Zou et al. & Song et al. & Shen \& Huang \\ \hline
\multirow{4}{*}{I } &3&200&0.202&\textbf{0.162}&0.204&\textbf{0.165}&0.201&0.262&0.198\\
                    &3&500&0.204&\textbf{0.152}&0.213&0.159&0.211&0.292&0.205\\
                    &5&200&0.359&\textbf{0.199}&0.420&0.228&0.419&0.445&0.369\\
                    &5&500&0.482&\textbf{0.209}&0.651&0.378&0.443&0.576&0.397\\
                   \hline
\multirow{4}{*}{II} &3&200&\textbf{0.111}&0.163&0.159&0.162&0.130&\textbf{0.107}&\textbf{0.109}\\
                    &3&500&\textbf{0.110}&0.166&0.158&0.166&0.127&\textbf{0.105}&\textbf{0.106}\\
                    &5&200&0.284&0.538&0.354&0.459&0.293&0.276&\textbf{0.232}\\
                    &5&500&0.376&0.620&0.630&0.705&0.331&0.361&\textbf{0.286}\\
                   \hline
\multirow{4}{*}{III}&3&200&\textbf{0.114}&0.151&0.180&0.155&0.161&\textbf{0.115}&0.135 \\
                    &3&500&\textbf{0.125}&0.156&0.197&0.162&0.169&\textbf{0.124}&0.146 \\
                    &5&200&0.420&\textbf{0.344}&0.509&0.407&0.530&0.502&0.405  \\
                    &5&500&0.537&\textbf{0.355}&0.741&0.558&0.554&0.588&0.435 \\
\end{tabular}}
\caption{The projection distance from the cross-validated estimate, averaged from 100 repetitions, for principal subspace estimation.  The standard errors are at most 0.027. Smaller distance indicates more precise estimation. Highlighted are the best performed models (within 2 standard error of the smallest).
  \label{tab:simulation2}}
\end{table}

For each model we choose the number of principal components as $d=3$ or 5, and choose the number of variables as $p =200$ or $500$. We use a small sample size of $n = 100$ for both training and tuning. The true number of principal components is treated as known. The empirical performances based on 100 repetitions of the experiments are summarized in Tables~\ref{tab:simulation1} and \ref{tab:simulation2}.
Table~\ref{tab:simulation1} summarizes the potential of each method, while Table~\ref{tab:simulation2} shows the actual numerical performance with the automatic tuning parameter selection. 

We note several observations from the simulation studies.
First, POI solutions are potentially closer to the truth than Fast POI solutions. 
Second, as expected, our methods with coordinate-wise sparsity-inducing penalty (POI-C and FastPOI-C) work well for the coordinate-wise sparse models (I and III), especially for the larger subspace dimension $d = 5$. In contrast, the lasso-type penalty works well for Model II, the eigenvectors of which are strictly element-wise sparse. 
Finally,  POI-C provides much more accurate estimates than the competing methods for coordinate-wise sparse models (Model I). 

In terms of the variable selection performance of the estimates, POI-C and FastPOI-C are both clearly superior than any other methods for Model I, and are comparable to the best performing methods for Models II and III. Numerical results for the variable selection performance can be found in the online supplementary material.

\subsection{Sparse Subspace Learning for Multiclass LDA}\label{sec:numerical_LDA}

We now apply our algorithm in the estimation of sparse discriminating basis for multi-group data.
Suppose that $\xv$ follows a $K$-mixture of multivariate normal distributions, each with mean $\muv_i$ and a common variance $\boldsymbol{\Sigma}_i = \boldsymbol{\Sigma}$, for all $i = 1, \ldots, K$, where $K>1$. Assuming that the group membership $y$ is also observed, write  $\boldsymbol{\Sigma}_T = \Cov(\xv)$ for the total-covariance matrix and $\boldsymbol{\Sigma}_W = \boldsymbol{\Sigma}$ for the within-group covariance matrix. The between-group covariance matrix is $\boldsymbol{\Sigma}_B = \boldsymbol{\Sigma}_T - \boldsymbol{\Sigma}_W$, whose rank is at most $K-1$. Then the problem of finding the discriminant subspace is equivalent to the GEP with
 $\Av = \boldsymbol{\Sigma}_B$ and $\Bv = \boldsymbol{\Sigma}_W$.

We demonstrate the performance of Fast POI, but not POI. There are three reasons for this. First of all, Fast POI is well-suited for the case where the rank of $\Av$ is known. In multiclass LDA, the rank of $\Av$ is typically $K-1$. Second, the performance of Fast POI was comparable or even superior to that of POI in our preliminary numerical studies. Third, POI requires considerably longer computation times than Fast POI. The performance of Fast POI estimates are compared with recently proposed linear sparse classifiers \citep{Clemmensen:2011,mai2015multiclass,gaynanova2016simultaneous}.


To simulate data, we use  mixtures of normal distributions $N(\muv_i, \boldsymbol{\Sigma})$.
We set the mean matrix $\muv = (\muv_1,\ldots, \muv_K)$ and $\boldsymbol{\Sigma}$ as follows. Throughout, the dimension is set to be $p = 200$.
Let $C(\rho)$ be the compound symmetry covariance model: $C(\rho) = (1-\rho) \Id_p + \rho \Jv_p$, $\Jv_p = \1v_p \1v_p^\smt$. Let $R(\rho)$ be the AR(1) model: $\{R(\rho)\}_{i,j} = \rho^{|i-j|}$.
Define $\Vv = (\vv_1,\vv_2,\vv_3)$ and $\Wv = (\wv_1,\wv_2,\wv_3)$ by
\begin{align*}
\vv_1^\smt &= (2,1,2,1,2, 0,\ldots,0)_{ 1\times p },    &      \wv_1^\smt &= (-1,1,1,1,1, 0,\ldots,0)_{ 1\times p },\\
\vv_2^\smt &= (1,-1,1,-1,1, 0,\ldots,0)_{ 1\times p },  &      \wv_2^\smt &= (1,-1,1,-1,1, 0,\ldots,0)_{ 1\times p },\\
\vv_3^\smt &= (0,1,-1,1,0, 0,\ldots,0)_{ 1\times p },   &      \wv_3^\smt &= (1,1,-1,1,0, 0,\ldots,0)_{ 1\times p }.
\end{align*}
Our models are given as follows. For Model I, $\muv = (\muv_1,\muv_2,\muv_3) = \Vv$, $\boldsymbol{\Sigma} = \Id_p$;
Model II, $\muv = \boldsymbol{\Sigma} \Vv$, $\boldsymbol{\Sigma} = C(0.5)$;
Model III, $\muv = \boldsymbol{\Sigma} \Vv$,  $\boldsymbol{\Sigma} = R(0.5)$;
Model IV,  $\muv = \boldsymbol{\Sigma} \Wv$, $\boldsymbol{\Sigma} = C(0.5)$; for
Model V, let $\tilde\Wv = 2(\wv_1,\wv_2,\wv_3, \bar{\wv})$, where  $\bar{\wv} = (\wv_1+\wv_2+\wv_3)/3$, and set
        $\muv = (\muv_1,\muv_2,\muv_3, \muv_4)  = \boldsymbol{\Sigma} \tilde\Wv$, and $\boldsymbol{\Sigma} = C(0.5)$.
        %
%
%
%
%
%

The true generalized eigenvectors are given by solving the generalized eigenvector problem with $\Av = \sum_{k = 1}^K (\muv_k - \bar\muv)(\muv_k - \bar\muv)^\smt / K$, where $\bar\muv = \sum_{k=1}^K \muv_k / K$, and $\Bv = \boldsymbol{\Sigma}$.  
Note that although the eigenvectors are different for different models, the true subspaces in Models I--III are the same; they are all spanned by $\{\vv_i - \vv_j, : i \neq j\}$. Likewise, the true subspaces in Models IV and V are both spanned by $\{\wv_i - \wv_j, : i \neq j\}$.
Since the basis of a subspace can only be coordinate-wise sparse, it is expected that coordinate-wise sparse estimates (including FastPOI-C and the method of \cite{gaynanova2016simultaneous}) work well. Our simulation result, reported below, concurs.

In Models I--IV, we need to estimate a subspace of dimension 2, since there are $K = 3$ groups. In Model V, we still need to estimate the 2-dimensional subspace, even though there are $K = 4$ groups. In principle, this information is not available to statisticians, and for all model estimation, we estimate the subspace of dimension $d = K-1$; that is, the model is misspecified for Model V. Our simulation shows that our method works well for this case as well.



Our application to sparse LDA as well as the competing methods \citep{Clemmensen:2011,mai2015multiclass,gaynanova2016simultaneous} first estimate sparse basis vectors, then apply multiclass LDA to the dimension-reduced data. Following their approaches, the standard LDA is applied to the data pair ($\Xv\widehat\Uv,y)$ for the classification and prediction. From each model above, we generate a training set of size $n_i = 30$ (for each group), a tuning set of the same size, and a testing set of size $n = 100Kn_i$. Our tuning procedure, utilizing (\ref{eq:CV2}), is used for all methods.
 %
 The testing set is used to estimate the misclassification rate. Experiments are repeated 100 times. We report the qualities of subspace learning and classification performance in Tables~\ref{tab:LDA-simulation-short} and \ref{tab:LDA-simulation2-short}, respectively.

Table \ref{tab:LDA-simulation-short} shows that FastPOI-C provides most accurate estimates of the the true subspace. As explained above, it makes most sense to assume coordinate-wise sparsity in subspace learning. Thus the methods of \cite{mai2015multiclass} and \cite{Clemmensen:2011}, seeking element-wise sparsity, are expected to be inferior in subspace learning. Likewise, the performance of FastPOI-L is inferior to that of FastPOI-C. Note that \cite{gaynanova2016simultaneous} also imposed the coordinate-wise sparsity,  thus showing better performance in subspace learning than other method, except FastPOI-C. 

The proposed FastPOI-C also performed the best in terms of classification. It is evident from Table \ref{tab:LDA-simulation2-short} that FastPOI-C yields the smallest misclassification rates for Models I--IV, and is second to Gaynanova's classifier for Model V. Given the similarity of FastPOI-C and Gaynanova's (note Remark \ref{remark:1-poi-Mai-Gaynanova} and that they both seek coordinate-wise sparse solutions), it is expected that both perform well in classification.

\begin{table}[t]
{
\centering
\begin{tabular}{c|ccccc}
Model  & FastPOI-L & FastPOI-C & Mai et al. & Clemmensen et al. & Gaynanova et al. \\ \hline
          I       &  \textbf{0.328} &   \textbf{0.313} &   0.371 &   0.622 &   0.377     \\
           II   &  0.839 &   \textbf{0.570} &   0.936 &   0.817 &   0.611    \\
            III   &  0.644 &   \textbf{0.437} &   0.542 &   0.784 &   0.608     \\
            IV    &  0.852 &   \textbf{0.478} &   0.916 &   0.689 &   0.514    \\
             V    &  0.712 &   \textbf{0.359} &   0.869 &   \textbf{0.365} &   0.411   \\
\end{tabular}}
\caption{The projection distance from the estimate, averaged from 100 repetitions, for sparse discriminant basis learning.  The standard errors are at most 0.024. Smaller distance indicates more precise estimation.
Highlighted are the best performed models (within 2 standard error of the smallest).
  \label{tab:LDA-simulation-short}}
\end{table}

\begin{table}[t]
\centering
\begin{tabular}{c|ccccc}
Model  & FastPOI-L & FastPOI-C & Mai et al. & Clemmensen et al. & Gaynanova et al. \\ \hline
                 I  &   \textbf{7.46}    &       \textbf{7.27}    &       9.41    &      10.23   &        12.30     \\
                  II   &  30.50   &        \textbf{8.72 }  &       21.74  &         9.70  &         12.49    \\
                    III    &   17.10    &      \textbf{12.41}    &      17.23  &        18.41    &      19.19    \\
                   IV    &   35.84    &    \textbf{16.03}    &     23.62  &       18.00    &     19.68      \\
                  V    &   32.40    &     \textbf{16.13}    &     26.80   &      \textbf{16.57}   &      \textbf{15.98}      \\
\end{tabular}
\caption{Misclassification rates (in percent) of the test set, averaged from 100 repetitions.  The standard errors are at most 1.19. Smaller error rate indicates better classification. Highlighted are the best performed models (within 2 standard error of the smallest).
  \label{tab:LDA-simulation2-short}}
\end{table}

An alternative tuning procedure, minimizing misclassification error rates in the special case of multiclass LDA, provides similar numerical performances. In fact, both choices of the tuning parameter are generally close to each other, as we have seen in Fig.~\ref{fig:tuning}. Fast POI tends to choose more variables than needed, but shows better sensitivity, i.e., more signal variables (or coordinates) are included in the estimates, than other methods.  In terms of an overall variable selection performance, FastPOI-C and \cite{gaynanova2016simultaneous}'s method shows similar performances, while both are better than the others. 
Simulation results for the alternative choices of tuning parameters and for the variable selection performance can be found in the online supplementary material.


\subsection{Sufficient Dimension Reduction}

%
%
%

Sufficient dimension reduction \citep{cook2009regression} aims to find a projection of data $\xv$ that is sufficient for (i.e., preserves all information about) the conditional distribution of $y$ given $\xv$. Many sufficient dimension reduction (SDR) methods can be cast into a GEP, as shown in \cite{li2007sparse}. In particular, the solution for the sliced inverse regression  \citep[SIR, ][]{li1991sliced}, the most well-known method for SDR, is given by the eigenvectors of the GEP (\ref{eq:gep0}) where $\Av = \mbox{Cov}[ \E\{\xv - \E(\xv) | y \}]$ and $\Bv = \mbox{Cov} (\xv)$. We apply our sparse solutions of GEP by POI-C to SIR for a benchmark data set. In all examples in this subsection, we used $\lambda = \lambda_{\max}/2$.

The Tai-Chi data have been sometimes used as a benchmark data for various methods of SDR, e.g., in \cite{wu2009sparse}.
For our purpose, we randomly generate $(\xv,y)$ pairs for $n = 1000$ times as shown in Fig.~\ref{fig:Tai-Chi}. The first two coordinates of $\xv \in \Real^p$ are uniformly sampled over the disk with radius 2,  while the rest of coordinates are sampled from the standard normal distribution. The binary variable $y$ depends on the location of $\xv$, and is 1 if $\xv$ lies on the `yin' part of the Tai-Chi symbol, 0 otherwise. The true subspace only spans over the first two coordinates.

For sufficient dimension reduction of the toy data, a direct competitor is \cite{chen2010coordinate}'s sparse estimation for SIR. Note that \cite{tan2016sparse}'s method is not applicable to this data set as it only computes one eigenvector, while at least two eigenvectors are needed here. The method of \cite{li2007sparse} is not compared, as \cite{chen2010coordinate} improves upon \cite{li2007sparse}. Figure~\ref{fig:Tai-Chi} displays the projected data onto the estimated subspaces from the original SIR and our adaptation of SIR with POI-C. The graphical result from \cite{chen2010coordinate} is almost identical to POI-C, and both methods greatly improve upon the original SIR for this data set.

To further compare our approach with  \cite{chen2010coordinate}'s, we have repeated the experiment for various cases of sample size and dimension.
It appears that  Chen's method requires large sample size to work well, and is numerically unstable for small sample sizes. Our method exhibits great performances for any case, and requires only a fraction of computation time compared to Chen's. Relevant numerical results can be found in the online supplementary material.

\begin{figure}[t]
  \centering
  \vskip -1in
  \includegraphics[width=0.8\textwidth]{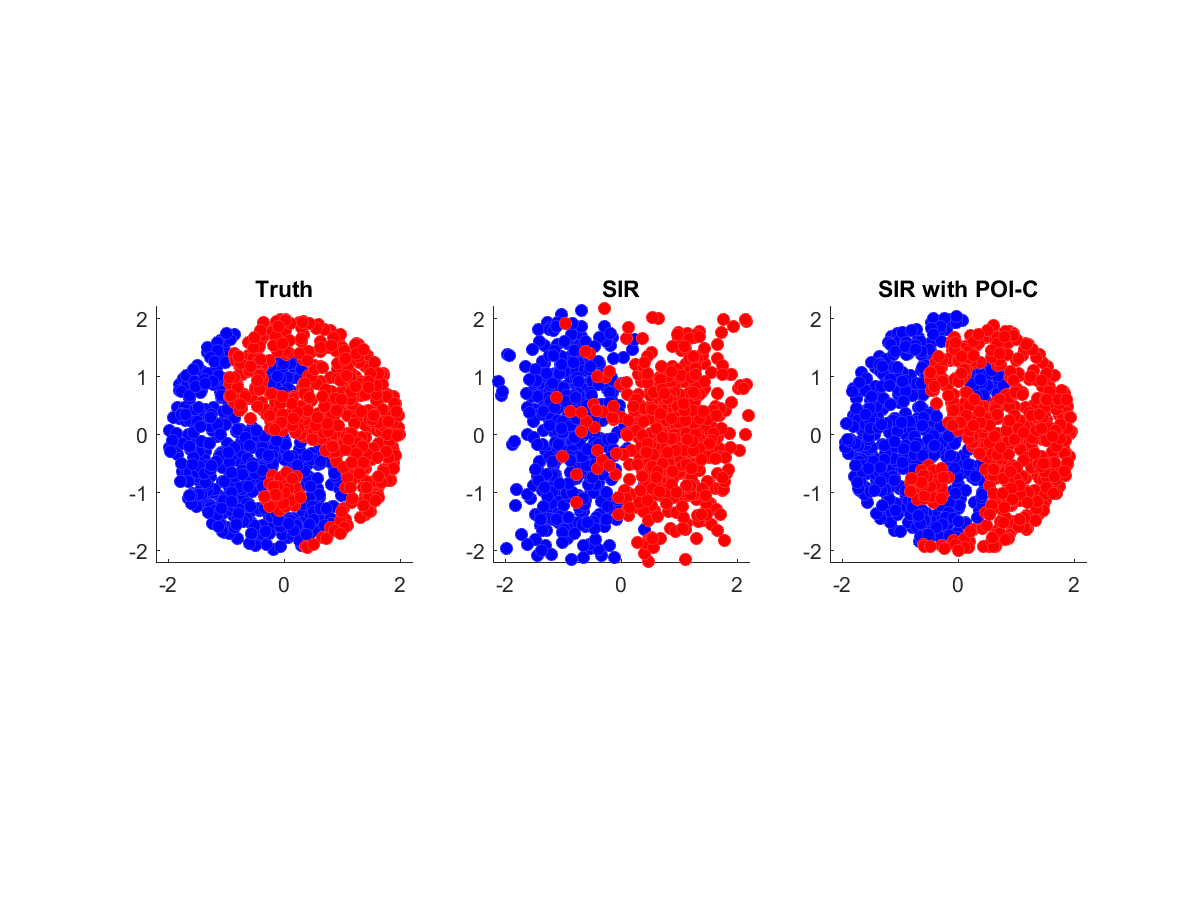}
  \vskip -1in
  \caption{Projections of a Tai-Chi data set to the true subspace (left panel), the estimated subspace by the original SIR \citep[][middle panel]{li1991sliced}, SIR estimated by POI-C (right panel). The data set has $n =1000$ observations, each with dimension $p = 500$.
  \label{fig:Tai-Chi}}
\end{figure}

\subsection{Canonical Correlation Analysis}

 We briefly demonstrate an application of our methods to sparse canonical correlation analysis (CCA). It is well known that CCA can be viewed as a GEP \citep{gaynanova2016penalized,Safo2018}. The canonical coefficient vectors are defined as follows. For two random vectors $\xv \in \Real^p$, $\yv \in \Real^q$, write  $\boldsymbol{\Sigma}_1 = \Cov(\xv)$, $\boldsymbol{\Sigma}_2 = \Cov(\yv) $ and $\boldsymbol{\Sigma}_{12} = \Cov(\xv, \yv) = \boldsymbol{\Sigma}_{21}^\smt$. The coefficient vectors for the first pair of canonical variables are $(\gv_1, \hv_1) \in \Real^p \times \Real^q$,
 maximizing the correlation between $\gv^\smt \xv$ and $\hv^\smt \xv$:
$\rho (\gv,\hv) = {\gv^\smt \boldsymbol{\Sigma}_{12} \hv} / {(\gv^\smt\boldsymbol{\Sigma}_1\gv)^\half (\hv^\smt\boldsymbol{\Sigma}_2 \hv)^{\half}}.
$
Since $\rho(\gv,\hv)$ is invariant under individual scaling of $(\gv,\hv)$, a Lagrangian formulation of the maximization involves the condition
$\gv^\smt\boldsymbol{\Sigma}_1 \gv = \hv^\smt\boldsymbol{\Sigma}_2\hv = 1$, and the first-order condition of the Lagrangian coincides with the GEP (\ref{eq:gep0}) for
\begin{equation}\label{eq:CCA-1}
\Av = \left(\begin{array}{cc}
                \0v & \boldsymbol{\Sigma}_{12} \\
                 \boldsymbol{\Sigma}_{21} &\0v
 \end{array}\right), \quad \Bv = \left(\begin{array}{cc}
                \boldsymbol{\Sigma}_1 & \0v \\
                \0v & \boldsymbol{\Sigma}_2
 \end{array}\right),
\end{equation}
where the solution $(\uv, \lambda)$ corresponds to the concatenated coefficient vector $(\gv^\smt, \hv^\smt)^\smt$ and canonical correlation coefficient $\rho(\gv, \hv)$, respectively.
An alternative formulation of CCA is given by solving the GEP (\ref{eq:CCA-1}) with respect to individual $\gv$ or $\hv$, leading to two GEPs:
 \begin{align}\label{eq:CCA-2}
\Sigmav_{12}\Sigmav_{2}^{-1}\Sigmav_{21} \gv = \lambda \Sigmav_{1}\gv, \quad
\Sigmav_{21}\Sigmav_{1}^{-1}\Sigmav_{12} \hv = \lambda \Sigmav_{2}\hv.
\end{align}

 In performing a sparse CCA, we use (\ref{eq:CCA-2}) as done in \cite{Safo2018}.
In the estimation of the pair $(\gv,\hv)$ from a sample, we follow the suggestion from \cite{witten2009penalized} and \cite{Safo2018} of first standardizing the data, then replacing $\widehat\Sigmav_1$ by $\Iv_p$ and $\widehat\Sigma_2$ by $\Iv_q$.
With the sample cross-covariance matrix $\widehat\Sigmav_{12}$,
we apply the POI in solving
$\Sigmav_{12}\Sigmav_{12}^\smt \gv = \lambda \gv$ and
$\Sigmav_{12}^\smt\Sigmav_{12} \hv = \lambda \hv$. Applying our method, or any existing general sparse GEP methods \citep{sriperumbudur2011majorization,song2015sparse}, to the GEP (\ref{eq:CCA-1}) typically leads unsatisfactory results partly due to the fact that $\Av$ is not in general nonnegative definite and also because of different scales of $\Sigmav_{1}$ and $\Sigmav_{2}$. In a simulation study with $n =80$ observations of two random vectors of dimension $p = 200, q = 150$, in which there is one canonical pair, the performance of POI estimates is comparable to the method of \cite{Safo2018}, and are superior than the methods of \cite{witten2009penalized} and \cite{gao2017sparse}. We refer to the online supplementary material for simulation settings and numerical results.

\appendix

\section{Technical Details}\label{sec:technicalLemma}

\begin{proof}[Proof of Proposition~\ref{thm1}]
 There exists a nonsingular matrix $\Sv$ such that $\Uv_d = \tilde\Qv_d \Sv$. The equation  (\ref{eq:gepU}) is written as
\begin{equation*}
\Av \tilde\Qv_d \Sv = \Bv \tilde\Qv_d \Sv \Lambdav_d,
\end{equation*}
which yields
\begin{equation*}
(\tilde\Qv_d^\smt\Av \tilde\Qv_d) \Sv = (\tilde\Qv_d^\smt\Bv \tilde\Qv_d) \Sv \Lambdav_d,
\end{equation*}
moreover, we have $\Sv^\smt(\tilde\Qv_d^\smt\Bv \tilde\Qv_d) \Sv = \Uv_d^\smt\Bv\Uv_d = \Iv_d$.
Therefore, $\Sv$ and $\Lambdav_d$ respectively are the matrix of eigenvectors and the diagonal matrix of eigenvalues of the GEP problem (\ref{eq:gepdual}).
\end{proof}



The following elementary lemma is used in the proof of Proposition~\ref{lem:optionDsubspaceequality}. We use the notation $\Cc(\Av)$ to denote the column space of matrix $\Av$.

\begin{lem}\label{lem:simplelinearalgebra}
Let $\Mv, \Mv_1,\Mv_2 \in \Real^{p\times d} , \Nv \in \Real^{d\times d}, \Lv \in \Real^{p\times p}$ and assume that both $\Nv, \Lv$ are invertible.
\begin{enumerate}
\item $\Cc(\Mv) = \Cc(\Mv\Nv)$.
\item If $\Cc(\Mv_1) = \Cc(\Mv_2)$, then $\Cc(\Lv\Mv_1) = \Cc(\Lv\Mv_2)$.
\end{enumerate}
\end{lem}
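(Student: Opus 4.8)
The plan is to prove both parts by double inclusion of column spaces, using invertibility precisely where it is needed to reverse an otherwise one-sided inclusion. Throughout I would use that for any matrix $\Mv \in \Real^{p\times d}$ the column space $\Cc(\Mv) = \{\Mv\xv : \xv \in \Real^d\}$ is exactly the range of the associated linear map.

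For part (1), I would first record the inclusion $\Cc(\Mv\Nv) \subseteq \Cc(\Mv)$, which holds for \emph{any} $\Nv$ (invertible or not): every vector $\Mv\Nv\xv$ equals $\Mv\yv$ with $\yv = \Nv\xv$, hence lies in $\Cc(\Mv)$. The reverse inclusion is where invertibility enters: writing $\Mv = (\Mv\Nv)\Nv^{-1}$ exhibits each column of $\Mv$ as a vector in $\Cc(\Mv\Nv)$, so $\Cc(\Mv) \subseteq \Cc(\Mv\Nv)$. Combining the two inclusions yields the asserted equality $\Cc(\Mv) = \Cc(\Mv\Nv)$.

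For part (2), I would exploit that an invertible $\Lv$ acts as a bijection on $\Real^p$, so that $\Cc(\Lv\Mv)$ equals the image $\Lv\,\Cc(\Mv) := \{\Lv\vv : \vv \in \Cc(\Mv)\}$ of the subspace $\Cc(\Mv)$ under $\Lv$; indeed, as $\xv$ ranges over $\Real^d$ the vectors $\Lv\Mv\xv$ range over exactly $\{\Lv\vv : \vv \in \Cc(\Mv)\}$. Applying this identity to $\Mv_1$ and $\Mv_2$ and invoking the hypothesis $\Cc(\Mv_1) = \Cc(\Mv_2)$ gives the chain $\Cc(\Lv\Mv_1) = \Lv\,\Cc(\Mv_1) = \Lv\,\Cc(\Mv_2) = \Cc(\Lv\Mv_2)$.

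Since each claim reduces to elementary set manipulation, there is no genuine obstacle here; the only point demanding attention is to use invertibility in exactly the right place—for the reverse inclusion in (1), and for the bijectivity of $\Lv$ in (2)—so as not to assert more than what actually holds for general (non-invertible) factors.
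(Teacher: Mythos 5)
Your proof is correct; the paper itself explicitly omits the proof of this lemma, so there is no authorial argument to compare against, and your double-inclusion argument via $\Mv=(\Mv\Nv)\Nv^{-1}$ for (1) and the image identity $\Cc(\Lv\Mv)=\{\Lv\vv:\vv\in\Cc(\Mv)\}$ for (2) is exactly the standard elementary route one would expect. One minor observation: the identity used in (2) holds for arbitrary $\Lv$, so invertibility of $\Lv$ is not actually needed for the implication as stated (it would only be needed to reverse it); this does not affect the validity of your argument.
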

We omit the proof of Lemma~\ref{lem:simplelinearalgebra}.

\begin{proof}[Proof of Proposition~\ref{lem:optionDsubspaceequality}]
Under Case (a), $\Bv = \Iv_p$, and the generalized eigen-equation becomes $\Av\Qv_d = \Qv_d \Lambdav_d^*$. Thus $\Qv_d = \Vv_d$ by definition, which in turn leads to
   $\Cc(\Bv^{-1}\Vv_d) = \Cc(\Vv_d) = \Cc(\Qv_d)$.

  For case (b), define $\Cv = \Bv^{-\half}\Av\Bv^{-\half}$. Then the generalized eigen-equation rewrites to
  $\Cv\Bv^{\half} \Qv_d = \Bv^{\half} \Qv_d \Lambdav_d^*.$
  By the QR decomposition of $\Bv^{\half}\Qv_d$, there exist $\Qv_d^\dagger \in \Oc(p,d)$ and a $k \times k$ upper-triangular matrix $\Lambdav^\dagger_d$ satisfying
  \begin{equation}
  \Cv \Qv_d^\dagger = \Qv_d^\dagger  \Lambdav^\dagger_d. \label{eq:daggereq}
  \end{equation}
  Since $\rank(\Cv) = k$, (\ref{eq:daggereq}) is equivalent to
  \begin{equation}
  \Cv [\Qv_d^\dagger, \Qv_d^{\perp}] = [\Qv_d^\dagger, \Qv_d^{\perp}] \begin{pmatrix}
  \Lambdav^\dagger_d & \0v \\
  \0v  & \0v \\
\end{pmatrix},  \label{eq:daggereq2}
  \end{equation}
  where $[\Qv_d^\dagger, \Qv_d^{\perp}] \in \Oc(p)$. Thus $\Cv = \Qv_d^\dagger  \Lambdav^\dagger_d (\Qv_d^\dagger)^\smt$, and  we get $\Cc(\Cv) = \Cc(\Qv_d^\dagger) = \Cc(\Bv^{\half}\Qv_d)$. (These are obtained by the definition of eigendecomposition and QR decomposition.)
  On the other hand, using Lemma~\ref{lem:simplelinearalgebra}, it can be shown that
  $\Cc(\Bv^{\half} \Bv^{-1}\Vv_d) =  \Cc( \Bv^{-\half}\Vv_d) = \Cc(\Bv^{-\half}\Av \Bv^{-\half} ) = \Cc(\Cv)$.

  Since $\Bv^\half$ is invertible and $\Cc(\Bv^{\half} \Bv^{-1}\Vv_d) =  \Cc(\Bv^{\half}\Qv_d)$, again by Lemma~\ref{lem:simplelinearalgebra}, we conclude that
  $\Cc(\Bv^{-1}\Vv_d) =  \Cc(\Qv_d)$.
\end{proof}

\begin{proof}[Proof of Lemma~\ref{lem:closeformsolutionforEvalues}]
Notice that each optimization problem is column-wise separable. Each column-wise subproblem is then a least-square problem.
\end{proof}

\section*{Acknowledgements}
Jeon's research was supported by Basic Science Research Program of the National Research Foundation of Korea (NRF-2015R1A1A1A05001180) funded by the Korean government.

\bigskip
\begin{center}
{\large\bf SUPPLEMENTARY MATERIAL}
\end{center}

\begin{description}

\item[Additional analysis:] Document containing an extensive list of statistical GEP problems and additional numerical results. (.pdf file)

\item[Matlab routine:] Matlab functions that perform the sparse generalized eigenvector estimation as described in the article. (Compressed files .zip ) The files are also available at \texttt{https://github.com/sungkyujung/SparseEIG}

\end{description}

\bibliographystyle{Chicago}
\bibliography{SpCDPbib}

\newpage
\includepdf[pages=-]{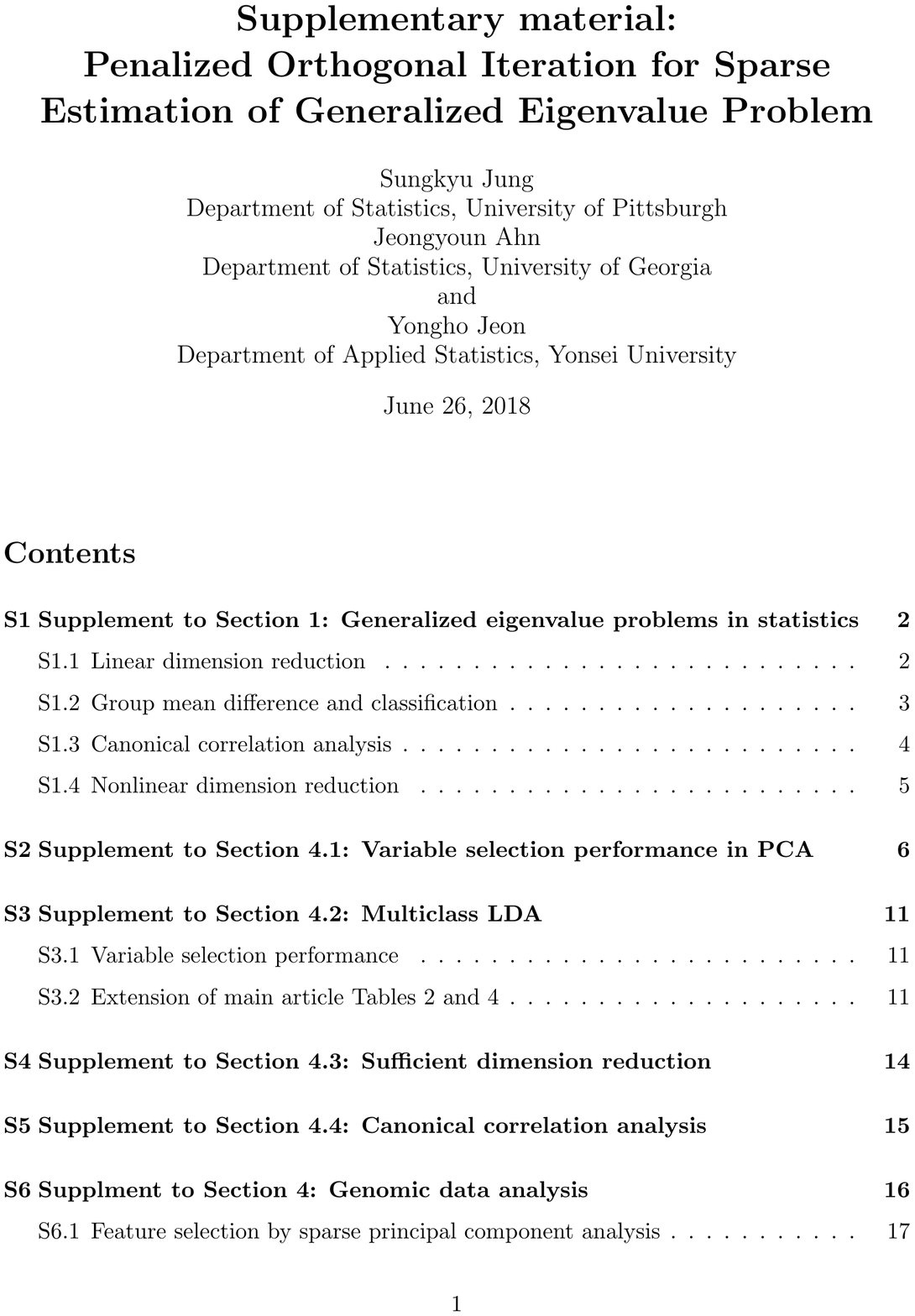}

\end{document}